\newtheorem{thm}{Theorem}
\newtheorem{lemma}{Lemma}
\newtheorem{prop}{Proposition}
\newcommand{\bR} {\boldsymbol{R}}
\newcommand{{\diag}} {\mathrm{diag}}
\def\bal#1\eal{\begin{align}#1\end{align}}
\newcommand{\bp} {\begin{proof}}
\newcommand{\ep} {\end{proof}}
\newcommand{{\bRF}} {\right\}}
\newcommand{\bi}{\begin{itemize}}
\newcommand{\ei}{\end{itemize}}
\newcommand{\ben}{\begin{enumerate}}
\newcommand{\een}{\end{enumerate}}
\begin{document}

\title{The Secrecy Capacity of The Gaussian Wiretap Channel with Rate-Limited Help}

\author{Sergey Loyka, Neri Merhav



\thanks{S. Loyka is with the School of Electrical Engineering and Computer Science, University of Ottawa, Ontario, Canada, e-mail: sergey.loyka@uottawa.ca.}
\thanks{N. Merhav is with the Andrew and Erna Viterbi Faculty of Electrical and Computer Engineering, Technion - Israel Institute of Technology, Haifa, Israel, e-mail: merhav@ee.technion.ac.il.}
\thanks{This paper was presented in part at the IEEE International Symposium on Information Theory, Helsinki, Finland, June 26 - July 1, 2022.}
}


\maketitle


\begin{abstract}
The Gaussian wiretap channel with rate-limited help, available at the legitimate receiver (Rx) or/and transmitter (Tx), is studied under various channel configurations (degraded, reversely degraded and non-degraded). In the case of Rx help and all channel configurations, the rate-limited help results in a secrecy capacity boost equal to the help rate irrespective of whether the help is secure or not, so that the secrecy of help does not provide any capacity increase. The secrecy capacity is positive for the reversely-degraded channel (where the no-help secrecy capacity is zero) and no wiretap coding is needed to achieve it. More noise at the legitimate receiver can sometimes result in higher secrecy capacity. The secrecy capacity with Rx help is not increased even if the helper is aware of the message being transmitted. The same secrecy capacity boost also holds if non-secure help is available to the transmitter (encoder), in addition to or instead of the same Rx help, so that, in the case of the joint Tx/Rx help, one help link can be omitted without affecting the capacity. If Rx/Tx help links are independent of each other, then the boost in the secrecy capacity is the sum of help rates and no link can be omitted without a loss in the capacity. Non-singular correlation of the receiver and eavesdropper noises does not affect the secrecy capacity and non-causal help does not bring in any capacity increase over the causal one.
\end{abstract}


\section{Introduction}

Physical-layer security has emerged as a valuable alternative to cryptography-based techniques \cite{Bloch-11}-\cite{Regalia-15}, especially over wireless channels and networks, and it also plays an important role in modern industrial standards \cite{Wu-18}-\cite{Chorti-22}. While the original work on information-theoretic secrecy dates back to Shannon himself \cite{Shannon-49}, Wyner's wiretap channel (WTC) model \cite{Wyner-75} established itself as a very useful tool for many different settings and configurations. It includes one legitimate transmitter-receiver pair and
one wiretapper (or eavesdropper) to be kept ignorant of the transmitted message; see \cite{Massey-83} for a simplified analysis of this model. Its key performance metric is the secrecy capacity, i.e. the largest achievable rate subject to (weak or strong) secrecy in addition to a reliability constraint, possibly under a power constraint. The original degraded WTC model has been extended and developed in many respects, of which we mention here only a few. Csiszar and Korner \cite{Csiszar-78} extended it to broadcast channels with confidential messages, including non-degraded WTC as a special case and established its secrecy capacity, which became a starting point for many further extensions and developments, see e.g. \cite{Bloch-11}-\cite{Wu-18} and references therein. The discrete memoryless model was extended to single-antenna (SISO) Gaussian settings in \cite{Leung-Yan-Cheong-78} and further to multi-antenna (MIMO) settings in \cite{Khisti-10-jul}-\cite{Oggier-11}; the respective secrecy capacities and optimal signalling strategies were also established \cite{Leung-Yan-Cheong-78}-\cite{Loyka-16b} and were further extended to interference-constrained channels \cite{Dong-18}, \cite{Dong-20}. The SISO Gaussian WTC with interference known to the transmitter was studied in \cite{Mitrpant-06} and its achievable secrecy rates were obtained. Encoding individual (deterministic) source sequences for the degraded memoryless WTC was studied in \cite{Merhav-21-Det}; a necessary condition for secure and reliable transmission of such sequences was obtained and an achievability scheme was also given. More refined performance metrics (beyond secrecy capacity), including secrecy exponents, finite blocklength and second-order coding rates, have also been studied, see \cite{Bastani-Parizi-17}, \cite{Yang-19} and references therein.

While the above models assume the availability of complete knowledge of the channel, such knowledge may be incomplete or inaccurate in many practical settings and a compound channel model emerges. Finite-state compound WTCs have been studied in \cite{Liang-09}, \cite{Bjelakovic-13} and their secrecy capacities were established under certain degradedness assumptions. The secrecy capacity of a class of compound Gaussian wiretap MIMO channels with normed uncertainty (not necessarily degraded or finite state) and an optimal signalling strategy were established in \cite{Schaefer-15}.

The original WTC model can also be extended in other respects, including the addition of side information and feedback, which are often available in modern systems and networks. While feedback does not increase the ordinary (no secrecy) capacity of memoryless channels, it is often able to boost the secrecy capacity, even in the memoryless settings, see e.g. \cite{Li-19} and references therein. The memoryless Gaussian WTC with noiseless (and hence rate-unlimited) feedback was considered in \cite{Gunduz-08}, whereby the transmitter (Tx) has access to the signal of the legitimate receiver (Rx) in a causal manner while the eavesdropper (Ev) has access to a noisy version of the feedback. Its secrecy capacity $C_{snf}$ was shown to be equal to the ordinary (no Ev, no feedback) AWGN channel capacity $C_0$,
\bal
\label{eq.Csnf}
C_{snf} = C_0
\eal
i.e. secrecy comes for free with the noiseless feedback and the secrecy capacity with feedback exceeds the no-feedback one, even though the channel is memoryless and, possibly, not degraded. The capacity-achieving strategy is the Schalkwijk-Kailath scheme \cite{Schalkwijk-66} (which is also optimal for the no-Ev/no-secrecy case) and no wiretap coding is needed. This result was further extended to a colored (ARMA) Gaussian noise channel with noiseless (rate-unlimited) feedback in \cite{Li-19} and a generalized Schalkwijk-Kailath scheme was shown to be optimal. Note, however, that, in this setting, the Tx has access to the noiseless feedback while the Ev observes only its noisy version, i.e. the Ev is at a significant disadvantage and the feedback is at least partially secure (hidden by the noise in the Ev feedback link). The situation changes dramatically if the Ev has access to the same noiseless (and, hence, non-secure) feedback as well or if the Rx-to-Tx feedback link is also noisy or rate-limited (the Schalkwijk-Kailath scheme does not work in this case).
The degraded memoryless Gaussian WTC with a secure \textit{rate-limited} feedback of rate $R_f< \infty$ was considered in \cite{Ardestanizadeh-09} and its secrecy capacity $C_{sf}$ was established:
\bal
\label{eq.Csf}
C_{sf} = \min\{C_0, C_{s0}+R_f\}
\eal
where $C_{s0}$ is the secrecy capacity without feedback\footnote{It follows that $C_{sf}=C_{snf}=C_0$ if the feedback rate is sufficiently large, $R_f \ge C_0-C_{s0}$, i.e. the increase in $C_{sf}$ with $R_f$ saturates at $C_{sf}=C_0$ and further increase in $R_f$ does not bring in any capacity increase so that the rate-unlimited feedback, as in \cite{Gunduz-08}, is not necessary to achieve $C_{sf}=C_0$.}. An optimal Tx strategy is fundamentally different from \cite{Li-19}, \cite{Gunduz-08} in this setting: it is a combination of the standard wiretap coding as in \cite{Wyner-75} with a secure  key generated by the Rx and sent to the Tx via the secure rate-limited feedback link (it is this 2nd part that is responsible for the $+R_f$ boost in the secrecy capacity as it protects a part of the message which was dummy in \cite{Wyner-75}). Note, however, that this strategy requires a secure feedback link, so that the feedback is (completely) unknown to the Ev, and it does not apply otherwise.

In modern communication systems/networks, various forms of side information, beyond feedback, are often available to the encoder or/and decoder (e.g. in a cloud radio access network with a centralized processing unit or in a cooperative communication system). This can be used to facilitate reliable communications and often results in a boost to the capacity \cite{Keshet-08}. One particular configuration was recently studied in \cite{Bross-20}-\cite{Marti-19}, where a rate-limited (and error-free) help is available to the decoder or/and encoder. In particular, a helper observes the noise sequence (which can be a signal intended for other users in a multi-user environment) and communicates his observation to the receiver (decoder) or transmitter (encoder) via an error-free but rate-limited data pipe. This model is, in our opinion, important from a practical perspective since it considers a rate-limited help, unlike some noiseless feedback models that essentially require rate-unlimited and error-free feedback links, which are hardly possible in practice. This rate-limited help was shown in \cite{Bross-20}-\cite{Marti-19} to provide a channel capacity boost equal to the help rate $R_h$ so that the resulting channel capacity is $C_0 +R_h$; flash signalling (i.e. using high-resolution help infrequently) was shown to be an optimal help strategy, in combination with two-phase time sharing.  Error exponents of Gaussian and modulo-additive channels with rate-limited Tx help were established in \cite{Merhav-21}, where it was also shown that the channel with Tx help is equivalent, in this respect, to the regular (no-help) channel and an additional parallel error-free bit-pipe of rate $R_h$.

In the present paper, we extend the help setting in \cite{Bross-20}-\cite{Marti-19} to the memoryless Gaussian wiretap channel. In the case of Rx help, we show that the same capacity boost as in \cite{Bross-20} also holds for the wiretap channel in terms of its secrecy capacity $C_s$: a receiver help of rate $R_h$ results in the secrecy capacity boost of $R_h$, i.e. the $+R_h$ capacity boost come with secrecy for free,
\bal
\label{eq.Cs.intro}
C_s = C_{s0} + R_h,
\eal
where $C_{s0}$ is the no-help secrecy capacity (if noises are not Gaussian, then the rate boost is upper bounded by $R_h$ and $C_s \le C_{s0} + R_h$). This holds for all possible configurations of the SISO Gaussian WTC, i.e. degraded, reversely degraded and non-degraded\footnote{While the standard (no help) SISO non-degraded Gaussian WTC is equivalent to either degraded or reversely-degraded one, this is not the case anymore when Rx/Tx help is also available to the Ev.}, with only one exception. Some surprising properties are observed. In particular, the secrecy capacity is the same irrespective  of whether the help is secure (i.e. unknown to the eavesdropper) or not, so that the secrecy of help does not bring in any increase in the secrecy capacity; this also applies to the case of partially-secure help. For the reversely-degraded channel (where the secrecy capacity is zero without help), we show that the secrecy capacity with Rx help is positive and equal to the help rate, that no  wiretap coding is needed to achieve it, and that burst signaling (along with regular channel coding) is optimal. Unlike the no-help case, more noise at the legitimate receiver can sometimes result in higher secrecy capacity. Surprisingly, the secrecy capacity with Rx help, secure or non-secure, is not increased even if the helper is aware of the message being transmitted.

An optimal Tx strategy to achieve $C_s$ in \eqref{eq.Cs.intro} is fundamentally different from those in \cite{Li-19}, \cite{Gunduz-08}, \cite{Ardestanizadeh-09}: it is a two-phase time sharing whereby no help is used in Phase 1 but just regular (no help) wiretap coding; on the contrary, much shorter Phase 2 makes use of high-resolution help in combination with regular (no Ev) channel coding but no wiretap coding at all. In the case of the reversely-degraded WTC, Phase 1 and hence wiretap coding are not needed so that burst signaling alone (with regular channel coding) is sufficient.

Comparing \eqref{eq.Cs.intro} to \eqref{eq.Csf} with $R_h=R_f$, note that $C_s > C_{sf}$ if the help/feedback rate is sufficiently high, $R_h=R_f > C_0-C_{s0}=C_2$, where $C_2$ is the no-help capacity of the Tx-Ev link, i.e. the helper setting provides larger secrecy capacity compared to the rate-limited but secure feedback setting, even though the help is not required to be secure. The same applies to \eqref{eq.Csnf}, where the feedback is rate-unlimited and at least partially-secure. Note also that, unlike $C_{sf}$ in \eqref{eq.Csf}, the increase in $C_s$ in \eqref{eq.Cs.intro} with $R_h$ does not saturate.

We further show that, in the case of the degraded or reversely-degraded Gaussian WTC, the same secrecy capacity boost, and hence \eqref{eq.Cs.intro}, hold when non-secure help is available to the transmitter, in addition to or instead of the same Rx help, and an optimal signalling is still two-phase time sharing. Thus, if the Tx and Rx help links are identical (carry the same information), then any one can be omitted without affecting the capacity. This is not the case anymore if the help links are independent: in this case, the secrecy capacity boost is the sum of help rates, an optimal signalling is a three-phase time sharing and no help link can be omitted without capacity loss.

While causality is immaterial for Rx help (since the receiver starts decoding after the whole block of symbols is received), it becomes important for the Tx help since the transmitter performs sequential symbol-by-symbol transmission. Therefore, we distinguish between causal and  non-causal Tx help. In the latter case, the help is based on the whole noise sequence and is available to the Tx in advance. In the former case, the Tx help at time $i$ is based on the noise sequence up to time $i$ only. Interestingly, the causality of Tx help, unlike that of feedback, has no impact on the secrecy capacity (this property is similar to that of the no-secrecy channel capacity with Tx help in \cite{Lapidoth-20}).

Unlike the studies of Gaussian WTCs with noiseless (and hence rate-unlimited) feedback in \cite{Li-19}, \cite{Gunduz-08}, our help links are rate-limited, as in \cite{Bross-20}-\cite{Marti-19}, and we also allow here the Ev to have access to the same help as the legitimate Rx or/and Tx (in the case of non-secure help). In our rate-limited setting, causality of help has no impact on the secrecy capacity and, in the case of Rx help, the secrecy capacity is the same for perfectly secure and completely non-secure help (i.e. when exactly the same help is also available to the Ev). Unlike the study in \cite{Ardestanizadeh-09}, our help link is not required to be secure or causal and the channel is not required to be degraded.

In a related line of work, secure communication with a helper acting as a cooperating jammer was studied in \cite{Fritschek-16}, \cite{Chen-20} (this setting is partialy equivalent to an interference channel). However, no secrecy capacity was established but only the generalized degrees of freedom (GDoF), which characterize the high-SNR scaling of the secrecy capacity and are essentially the multiplexing gain in terms of secrecy rates. Unlike \cite{Fritschek-16}, \cite{Chen-20}, the present paper considers no jamming at all; rather, the help comes in a form of rate-limited error-free information about the noise sequence affecting the legitimate Rx, which is available to the Rx and/or Tx.

The rest of the paper is organized as follows. Various configurations (degraded, reversely-degraded and non-degraded) of the Gaussian WTC with Rx help are considered in Sections \ref{sec.D-WTC} to \ref{sec.Non-Deg.Ch} and their secrecy capacities are established in Theorems \ref{thm.Cs} - \ref{thm.Cs.Non-Deg} and Propositions 1, 2. The case of Tx help, instead of or in addition to the Rx help, is studied in Sections \ref{sec.D.Tx} - \ref{sec.Ind.TxRxH} and the respective secrecy capacities are established/characterized in Theorems \ref{thm.TxH.Cs} - \ref{thm.TxRxH-RD-I.Cs} and Proposition 3, including the same and independent Tx/Rx help links, and the case of correlated Rx and Ev noises.

\textit{Notations}: we follow the standard notations as much as possible, where random variables and their realizations are denoted by capital and lower case letters, respectively, and their alphabets follow from the respective channel models; $X^n$ denotes the sequence $(X_1,...,X_n)$; $H(\cdot)$, $h(\cdot)$ and $h(\cdot|\cdot)$ are the entropy, differential and conditional differential entropies, respectively, and $I(\cdot;\cdot)$ is the mutual information; $\mathbb{E}\{\cdot\}$ and $\Pr\{\cdot\}$ are statistical expectation and probability with respect to relevant random variables; $X-Y-Z$ denotes a Markov chain of random variables $X$, $Y$, and $Z$.


\section{Degraded Gaussian Wiretap Channel With Rx Help}
\label{sec.D-WTC}

We begin with the real-valued degraded (discrete-time) Gaussian wiretap channel:
\bal
Y_i = X_i + W_i,\ Z_i = Y_i + V_i,\ i=1,...,n
\eal
where $X_i$ is the real-valued transmitted symbol at time $i$, $W_i,\ V_i$ are Rx and Ev noise, which are zero-mean Gaussian, independent of each other with variances $\sigma^2_W$ and $\sigma^2_V$, respectively, see Fig. \ref{fig.RxH}. The channel is stationary and memoryless, so that $W^n$ and $V^n$ are i.i.d. sequences. We further assume that $\sigma_V^2 >0$ (unless stated otherwise). A slightly more general case, where noises are not Gaussian, will also be considered.

The helper model is as in \cite{Bross-20}-\cite{Marti-19} but extended to the WTC setting, whereby discrete help $T=T(W^n)$ of rate $n^{-1} H(T)\le R_h < \infty$ is available to the Rx and Ev (no further constraints on the helper function $T(W^n)$ are assumed, beyond its rate, unless stated otherwise), which we term "non-secure Rx help", so that the Rx and the Ev can estimate transmitted message $M$ using $T$ and their respective received signals $Y^n$ and $Z^n$. This falls into the framework of cooperative communications or communications with side information \cite{Keshet-08} and models practical links, which are always rate-limited (albeit the rate can be high, as in e.g. optical fiber links).
If no help is available to the Ev, we call it "secure Rx help". For Rx help, the difference between causal and non-causal help is immaterial, since the Rx waits until the whole block of length $n$ is received before decoding it.

We use the standard definition of the secrecy capacity as the supremum of all achievable secrecy rates, subject to the reliability, secrecy and power constraints, see e.g. \cite{Bloch-11}-\cite{Wu-18}, \cite{Wyner-75}-\cite{Csiszar-78}.
In particular, the (secret) message $M$ is selected randomly and uniformly from $\{1,...,2^{nR_s}\}$, where $R_s$ is a secrecy rate and $n$ is the blocklength. The Tx encoder maps it into $X^n$ and the Rx decoder maps $Y^n$ and the available help $T$ into a message estimate $\hat{M}$. The constraints are as follows:

\textit{Reliability constraint}: the error probability $P_e \triangleq \Pr\{M \neq \hat{M}\} \le \varepsilon$ for any $\varepsilon >0$ and sufficiently large $n$.

\textit{Weak secrecy constraint}: information leakage rate (to the Ev) $R_l$ satisfies
\bal
R_l \triangleq n^{-1} I(M;Z^nT) \le \delta
\eal
for any $\delta >0$ and sufficiently large $n$; $T$ is omitted in the case of secure help.

\textit{Average power constraint}:
\bal
\label{eq.TPC}
\frac{1}{n} \sum_{i=1}^n \mathbb{E}\{X_i^2\} \le P
\eal
We further assume that $P>0$ (otherwise, the capacity is, of course, zero).
\begin{figure}[t]
    \begin{center}
        \includegraphics[width=3.5in]{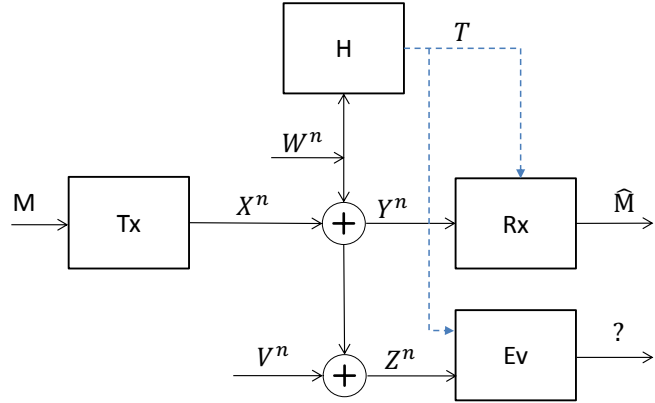}
    \end{center}
    \caption{Degraded wiretap channel with a rate-limited help $T$ at the Rx and Ev (if the help is not secure). $W^n$ and $V^n$ are i.i.d. noise sequences independent of each other and of $M, X^n$; $X^n=X^n(M)$, $T=T(W^n)$, $H(T)\le n R_h$.}
    \label{fig.RxH}
\end{figure}

The secrecy capacity of this channel with Rx help is established below.


\begin{thm}
\label{thm.Cs}
The secrecy capacity $C_{s}$ of the degraded memoryless WTC (not necessarily Gaussian) with secure or non-secure Rx help of rate $R_h$, as in Fig. \ref{fig.RxH}, is bounded as follows
\bal
C_{s} \le C_{s0} + R_h
\eal
where $C_{s0}$ is the secrecy capacity without help. This holds with equality if the noises are Gaussian and $\sigma_V^2 >0$, for which $C_{s0}=C_1-C_2$, where $C_1$, $C_2$ are the capacities of the Tx-Rx and Tx-Ev links without help.
\end{thm}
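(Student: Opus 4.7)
The plan is to prove $C_s\le C_{s0}+R_h$ in the general (possibly non-Gaussian) degraded setting and then match it via an explicit two-phase time-sharing scheme in the Gaussian case.

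For the converse, the key structural observation is that $T=T(W^n)$ is a deterministic function of the noise and hence independent of the message, so $I(M;T)=0$. Fano's inequality applied to the decoder, which is a function of $Y^n$ and $T$, gives $nR_s\le I(M;Y^n,T)+n\epsilon_n$, and the weak-secrecy constraint (secure or non-secure) implies $I(M;Z^n)\le n\delta_n$, since $I(M;Z^n)\le I(M;Z^n,T)$. Using $I(M;T)=0$ I split $I(M;Y^n,T)=I(M;Y^n)+I(M;T\mid Y^n)$ and bound $I(M;T\mid Y^n)\le H(T)\le nR_h$; the remaining term $I(M;Y^n)-I(M;Z^n)$ is at most $nC_{s0}$ by the classical Wyner/Csiszar--Korner converse for the memoryless degraded WTC. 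Letting $\epsilon_n,\delta_n\to 0$ yields $C_s\le C_{s0}+R_h$.

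For the Gaussian achievability I would exhibit a two-phase scheme. Fix small $\alpha\in(0,1)$, split the block into Phase~1 of length $n(1-\alpha)$ and Phase~2 of length $n\alpha$, and let the helper describe only the Phase-2 noise, so that $T$ is independent of Phase~1. In Phase~1 use standard Gaussian wiretap coding with per-symbol power $P_1\to P$ and no help, contributing secrecy rate $(1-\alpha)C_{s0}(P_1)\to C_{s0}$. In Phase~2 concentrate the entire help budget, i.e.\ a rate of $R_h/\alpha$ per Phase-2 symbol, into a high-resolution Bross--Lapidoth-style quantization of $W^{n\alpha}$, so that after the Rx subtracts the quantized noise the effective Tx--Rx noise variance is $\sim\sigma_W^2\,2^{-2R_h/\alpha}$; with flash power $P_2=P/\alpha$ chosen to meet \eqref{eq.TPC} on average, the Phase-2 Tx--Rx achievable rate is $R_2\sim R_h/\alpha+\tfrac12\log(P/(\alpha\sigma_W^2))$. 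Transmit at this rate with an ordinary capacity-approaching code, without wiretap binning. Even when the help is non-secure, the Ev's Phase-2 information, after it also subtracts the quantized noise, is bounded by $n_2\cdot\tfrac12\log(1+P_2/\sigma_V^2)=O(n\alpha\log(1/\alpha))=o(n)$, since the Ev's residual noise variance stays $\ge\sigma_V^2>0$. Together with the vanishing Phase-1 leakage, the per-block leakage tends to $0$ and the total rate $(1-\alpha)C_{s0}(P_1)+\alpha R_2$ tends to $C_{s0}+R_h$ as $\alpha\to 0$, matching the converse.

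The delicate step is the joint calibration of Phase~2: the quantization must be fine enough to deliver the full $R_h$ bonus to the Rx, while the flash power $P_2=P/\alpha$ must not inflate the Ev's Phase-2 capacity to a non-negligible fraction of $n$. Both ends of this trade-off hinge on the scaling $\alpha\log(1/\alpha)\to 0$: the Rx's bonus $\alpha\cdot R_h/\alpha=R_h$ survives time-sharing, whereas the Ev's per-symbol capacity $O(\log(1/\alpha))$ is damped by $n_2=n\alpha$ to $o(n)$, regardless of whether the help is available to the Ev.
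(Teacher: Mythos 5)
Your converse is essentially the paper's argument in a slightly rearranged form: the paper writes $nR_s\le I(M;Y^nT|Z^n)+2n\epsilon\le I(X^n;Y^n|Z^n)+H(T)+2n\epsilon$ and then single-letterizes $I(X^n;Y^n|Z^n)\le nI_0(X;Y|Z)\le nC_{s0}$, whereas you pull out $I(M;T|Y^n)\le H(T)\le nR_h$ first and then invoke the classical degraded-WTC bound on $I(M;Y^n)-I(M;Z^n)$. Both routes are valid and rest on the same two ingredients (the $H(T)$ term and the no-help converse); your observation that the non-secure secrecy constraint $I(M;Z^nT)\le n\delta$ already implies $I(M;Z^n)\le n\delta$ handles both help models in one pass, which is marginally cleaner than the paper's ``help at the Ev cannot increase the secrecy rate'' reduction.

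The achievability follows the paper's two-phase flash-signaling blueprint, but your power allocation is inconsistent with \eqref{eq.TPC} as written: with Phase~1 per-symbol power $P_1\to P$ over a fraction $1-\alpha$ of the block and flash power $P_2=P/\alpha$ over the remaining fraction $\alpha$, the block-average power is $(1-\alpha)P_1+\alpha P_2\to(2-\alpha)P>P$, so the scheme is inadmissible. The fix is easy and shows the power boost is unnecessary: keep $P_2=P$ in Phase~2 (as the paper does), in which case the Phase-2 rate is still $R_h/\alpha\,(1+o(1))$ because the quantization gain $R_h/\alpha$ dominates and the residual $\tfrac12\log(1+P/\sigma_W^2\cdot 2^{2R_h/\alpha})-R_h/\alpha=O(1)$ term contributes only $O(\alpha)$ after time sharing; alternatively, take $P_2=\beta P/\alpha$ and $P_1=(1-\beta)P/(1-\alpha)$ and let $\beta\to0$ after $\alpha\to0$. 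Your leakage accounting survives either repair: the paper obtains a bound on $R_{l2}$ that is uniform in $\tau$ (namely $\tfrac12\log(1+P/\sigma_V^2)$), while your $O(\log(1/\alpha))$ per-symbol bound is not uniform but is still damped to $o(1)$ by the factor $\alpha$, which suffices for weak secrecy. The key mechanism -- conditioning the Ev's observation on $W^n$ so that its residual noise variance stays at $\sigma_V^2>0$ regardless of the help -- is exactly the one used in \eqref{eq.AP.3}--\eqref{eq.AP.7}.
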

\begin{proof}

\textit{Converse}: For the converse, we do not assume that the noises are Gaussian and consider the case of secure Rx help $T$ (not available to the Ev); the case of non-secure help will follow since the availability of help to the Ev cannot increase secrecy rate. The converse is based on the following chain of inequalities, incorporating the secrecy and reliability constraints as well as functional relationships between various random variables in the channel model:
\bal\notag
n R_s &= H(M)\\
&= H(M|Z^n) + I(M;Z^n)\\
\label{eq.RxH.p1}
&\le H(M|Z^n) + n\epsilon\\
&= H(M|Z^n)- H(M|Y^n T) + H(M|Y^n T) + n\epsilon\\
\label{eq.RxH.p2}
&\le H(M|Z^n)- H(M|Y^n T)  + 2n\epsilon\\
&\le H(M|Z^n)- H(M|Y^n Z^n T)  + 2n\epsilon\\
\label{eq.RxH.p4}
&= I(M;Y^n T|Z^n) + 2n\epsilon\\
\label{eq.RxH.p5}
&\le I(X^n;Y^n T|Z^n) + 2n\epsilon\\
&= I(X^n;Y^n |Z^n) + I(X^n;T|Y^n Z^n) +2n\epsilon\\
\label{eq.RxH.p7}
&\le I(X^n;Y^n |Z^n) + H(T) +2n\epsilon\\
\label{eq.RxH.p8}
&\le n I_0(X;Y|Z) + nR_h +2n\epsilon\\
\label{eq.RxH.p9}
&= n (I_0(X;Y) - I_0(X;Z)+ R_h +2\epsilon)\\
\label{eq.RxH.p10}
&\le n (C_{s0} + R_h +2\epsilon)
\eal
where \eqref{eq.RxH.p1} follows from the secrecy constraint $I(M;Z^n) \le n\epsilon$; \eqref{eq.RxH.p2} follows from Fano inequality (due to the reliability constraint) $H(M|Y^n T) \le n \epsilon$;
\eqref{eq.RxH.p5} follows from Markov chain $M - X^n - Y^nT - Z^n$;
\eqref{eq.RxH.p7} follows from $I(X^n;T|Y^n Z^n) \le H(T)$;
\eqref{eq.RxH.p8} follows from
\bal
I(X^n;Y^n |Z^n) \le \sum_{i=1}^n I(X_i;Y_i|Z_i) &\le n I_0(X;Y|Z)
\eal
where the first inequality holds since the channel is memoryless and the second one is due to the concavity of the mutual information in the input distribution \cite{Wyner-75}\cite{Massey-83}; $I_0$ is the mutual information induced by input $X$ with the distribution $p_0(x) = n^{-1}\sum_i p_{x_i}(x)$, where $p_{x_i}(x)$ is the distribution of $X_i$; \eqref{eq.RxH.p9} follows from Markov chain $X - Y - Z$.
Thus,
\bal
R_s \le C_{s0} + R_h +2\epsilon
\eal
Since this holds for any $\epsilon >0$, it follows that $R_s \le C_{s0} + R_h$. This establishes the converse with secure Rx help. Since the presence of help at Ev cannot increase secrecy rate, the same upper bound applies with non-secure Rx help.

\textit{Achievability}. To prove achievability, we assume that the noises are Gaussian and combine the regular (no help) wiretap coding with the no-Ev flash signaling in \cite{Bross-20}. We consider first the case of non-secure Rx help (when the same help is available at the Rx and Ev), from which achievability with secure Rx help follows. To this end, recall that the ordinary (no Ev) flash signaling with Rx help consists of two phases of time-sharing \cite{Bross-20}:

\bi
\item Phase 1: no help is used at all for a fraction $(1-\tau)$ of the time, which achieves, with regular channel coding, a rate arbitrary close to the ordinary channel capacity $C$ for a sufficiently large blocklength.
\item Phase 2: Rx help is used at rate $R_h/\tau$ for a (very small) fraction $\tau$ of the time. In this phase, in addition to regular channel coding, a high-resolution scalar quantization (with $\lfloor2^{R_h/\tau}\rfloor$ levels) of each noise sample is provided to the Rx, so that the help is $T = \hat{W}^n$, where $\hat{W}_i = Q(W_i)$ and  $Q(\cdot)$ is a scalar quantizer. The Rx subtracts $\hat{W}_i$ from its received signal $Y_i$ and, after receiving the whole block, decodes it using nearest-neighbour decoding; for sufficiently large blocklength, this achieves a rate arbitrarily close to
\bal
\label{eq.RxH.Ph2.rate}
h(X) - \frac{1}{2}\log(2\pi e \sigma_W^2) + \frac{R_h}{\tau} = \frac{R_h}{\tau}(1+o(1))
\eal
where $o(1) \to 0$ as $\tau \to 0$, see \cite[eq. (20)]{Bross-20}. An alternative Phase 2 strategy, which maximizes error exponents using a simple lattice code with a uniform scalar quantizer (no need for i.i.d.-generated codebooks), can be found in \cite{Merhav-21}.
\ei
Overall, as $\tau \to 0$, the rate achieved after two-phase time-sharing is arbitrarily close to
\bal
(1-\tau)C + \tau R_h/\tau(1+o(1)) \to C +R_h
\eal
which is the channel capacity with Rx help. This also implies that providing high-resolution help infrequently ("flash signalling") is optimal.

To accommodate the Ev and the secrecy constraint, we modify this strategy as follows:
\bi
\item Phase 1: use the regular WTC coding with no help \cite{Bloch-11}-\cite{Wu-18}\cite{Wyner-75}\cite{Massey-83}\cite{Leung-Yan-Cheong-78} for the fraction $(1-\tau)$  of the time; this achieves a secrecy rate $R_s$ arbitrarily close to the regular WTC secrecy capacity $C_{s0}$: $R_s = C_{s0} - \epsilon$ for any $\epsilon >0$ and sufficiently large blocklength.
\item Phase 2: for the fraction $\tau$ of the time, use no WTC coding but ordinary channel coding under the flash signaling as above.
\ei
While it is clear that secrecy is guaranteed during Phase 1 (via wiretap coding), it is also clear that secrecy is not guaranteed during Phase 2 (since no wiretap coding is used) so it is not clear whether  secrecy is guaranteed overall (after time sharing). To demonstrate that this is indeed the case, we show that, during Phase 2, the information leakage rate $R_{l2}$ to the Ev is uniformly bounded,
\bal
R_{l2} \le R_0 <  \infty
\eal
for any $\tau$ and some $R_0$, where $R_0$ is independent of $\tau$ (but where $R_{l2}$ may depend on $\tau$), so that the overall leakage rate $R_l$ (after the time sharing) is
\bal
\label{eq.Rl}
R_l= (1-\tau)R_{l1} + \tau R_{l2} \le (1-\tau)\delta + \tau R_{0} \to \delta
\eal
as $\tau \to 0$, for any $\delta >0$ (or, equivalently, $R_l \le 2\delta$ for sufficiently small $\tau$, $\tau \le \delta/R_0$), where $R_{l1} \le \delta$ is the information leakage rate during Phase 1.

To see that indeed $R_{l2} \le R_0 < \infty$ uniformly in $\tau$, note the following:
\bal
R_{l2} &= n^{-1}I(M_2;Z^n \hat{W}^n|\mathcal{C})\\
&\le n^{-1}I(M_2;Z^n \hat{W}^n {W}^n|\mathcal{C})\\
\label{eq.AP.3}
&= n^{-1}I(M_2;Z^n|{W}^n\mathcal{C})\\
\label{eq.AP.4}
&\le n^{-1}I(X^n;Z^n|W^n\mathcal{C})\\
&= n^{-1}I(X^n;X^n +W^n +V^n|W^n\mathcal{C})\\
\label{eq.AP.7}
&= n^{-1}I(X^n;X^n+V^n|\mathcal{C})\\
&\le \frac{1}{2}\log\left(1+\frac{P}{\sigma_V^2}\right) = C_2' < \infty
\eal
where $M_2$ is a message sent in Phase 2, $X^n$ is a codeword (which depends on $M_2$, see Fig. \ref{fig.RxH}), and the conditioning is on an i.i.d. randomly-generated codebook $\mathcal{C}$ (the codebook generation, encoding and decoding are as in \cite{Bross-20}); \eqref{eq.AP.3} follows from independence of $M_2$ and $W^n, \hat{W}^n$ and from $\hat{W}_i = Q(W_i)$; \eqref{eq.AP.4} follows from the Markov chain $M_2 - X^n - Z^nW^n$; \eqref{eq.AP.7} follows from independence of $W^n$ and $X^n, V^n$.

Hence, arbitrary low information leakage rate is guaranteed after time sharing with $\tau \to 0$, which satisfies the secrecy constraint. At the same time, the overall secrecy rate (after time sharing) is
\bal\notag
\label{eq.Rs.RxH.Deg}
(1-\tau)(C_{s0} - \epsilon) &+ \tau R_h/\tau(1+o(1))\\
  &\to C_{s0} +R_h -\epsilon
\eal
for any $\epsilon >0$, as $\tau \to 0$, so that the secrecy capacity is $C_{s0} +R_h$, as required.

In the above secrecy analysis, we assume that the help is not secure, i.e. it is available to the Ev. Clearly, the secrecy constraint is also satisfied if the help is secure, i.e. not available to the Ev (since the lack of Ev help cannot increase leakage rate), and an achievable  secrecy rate remains the same. Since the converse also holds for the secure Rx help, the secrecy capacity also remains the same, regardless whether help is secure or not, i.e. the secrecy of help does not increase the secrecy capacity.
\end{proof}

It is worthwhile to note that flash signaling with Rx help provides here the same boost in the secrecy capacity as in the regular (no Ev)  channel capacity in \cite{Bross-20}, i.e. the $+ R_h$ boost comes with secrecy for free in the degraded Gaussian WTC. If the noises are not Gaussian, then the rate boost is upper bounded by $R_h$ (it remains to be seen whether it is actually equal to $R_h$).

Since $C_s$ in Theorem 1 is the same for secure and non-secure help, i.e. the secrecy of help does not bring in any capacity advantage, it also applies to the case of partially-secure help, i.e. when the Ev has access to a part of $T$.

Surprisingly, even if the helper \textsf{H} is aware of the message $M$ being transmitted, i.e. $T=T(W^n,M)$ as in Fig. \ref{fig.RxHEvHM}, the secrecy capacity is not affected and Theorem \ref{thm.Cs} still holds.

\begin{figure}[t]
    \begin{center}
        \includegraphics[width=3.5in]{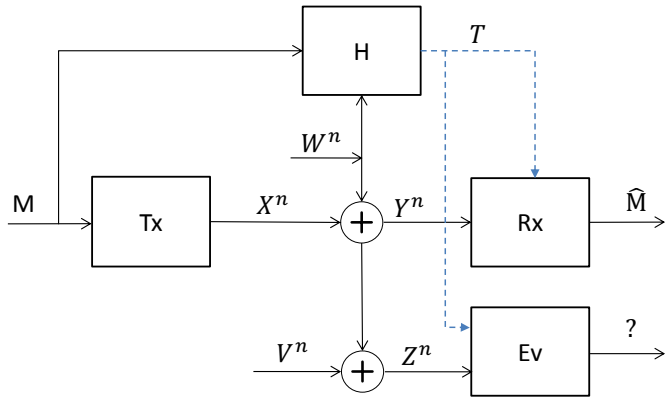}
    \end{center}
    \caption{The degraded WTC of Fig. \ref{fig.RxH} when the helper \textsf{H} is aware of the message $M$ being sent, $T=T(W^n,M)$.}
    \label{fig.RxHEvHM}
\end{figure}

\begin{prop}
Consider the degraded WTC with Rx help as in Theorem \ref{thm.Cs} and let the helper \textsf{H} be aware of the message being transmitted, i.e. $T=T(W^n,M)$ as in Fig. \ref{fig.RxHEvHM}. Then, Theorem \ref{thm.Cs} still holds.
\end{prop}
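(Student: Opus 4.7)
Achievability is immediate: a helper that discards $M$ and sets $T=T(W^n)$, as in Theorem~\ref{thm.Cs}, is a legal (in fact, a special) strategy in the message-aware setting, so the achievable secrecy rate $C_{s0}+R_h$ carries over unchanged. All the work is in the converse, and I would only need to patch the one step of Theorem~\ref{thm.Cs} whose derivation uses that $T$ is independent of $M$ given $W^n$.

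For the converse I would retrace the chain \eqref{eq.RxH.p1}--\eqref{eq.RxH.p4} verbatim, arriving at $nR_s\le I(M;Y^nT|Z^n)+2n\epsilon$. The obstacle is step \eqref{eq.RxH.p5}, which invoked the data-processing Markov chain $M-X^n-Y^nT-Z^n$; this chain fails here because $T$ depends on $M$ directly and not only via $X^n$, so knowing $M$ gives extra information about $T$ even given $X^n$. My workaround is to avoid replacing $M$ by $X^n$ inside a mutual information and instead push $X^n$ into a conditioning slot of the subtracted entropy:
\begin{align*}
I(M;Y^nT|Z^n) &= H(Y^nT|Z^n)-H(Y^nT|MZ^n)\\
&\le H(Y^nT|Z^n)-H(Y^nT|MX^nZ^n),
\end{align*}
where the inequality is ``conditioning reduces entropy''. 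Two observations then collapse the negative term to $H(Y^n|X^nZ^n)$: (i) since the channel is additive, $W^n=Y^n-X^n$, so $T=T(W^n,M)$ is a deterministic function of $(M,X^n,Y^n)$ and $H(T|MX^nY^nZ^n)=0$; and (ii) $M$ is independent of $(W^n,V^n)$, which gives the Markov chain $M-(X^n,Z^n)-Y^n$ and hence $H(Y^n|MX^nZ^n)=H(Y^n|X^nZ^n)$. Combining (i) and (ii) yields
\begin{equation*}
I(M;Y^nT|Z^n)\le H(Y^n|Z^n)+H(T|Y^nZ^n)-H(Y^n|X^nZ^n)=I(X^n;Y^n|Z^n)+H(T|Y^nZ^n).
\end{equation*}

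From here the tail of the proof of Theorem~\ref{thm.Cs} applies without modification: $H(T|Y^nZ^n)\le H(T)\le nR_h$; the bound $I(X^n;Y^n|Z^n)\le nI_0(X;Y|Z)$ follows from memorylessness and concavity of mutual information in the input distribution, as in \eqref{eq.RxH.p8}; and the structural Markov chain $X-Y-Z$, being a property of the physical channel and therefore unaffected by whether the helper sees $M$, gives $I_0(X;Y|Z)=I_0(X;Y)-I_0(X;Z)\le C_{s0}$. Assembling these pieces yields $R_s\le C_{s0}+R_h+2\epsilon$ and then $R_s\le C_{s0}+R_h$, matching the achievability. The only genuine obstacle is the one step identified above, and the detour through $H(Y^nT|MX^nZ^n)$ is designed precisely to bypass it by exploiting that $M$-dependence of $T$ becomes harmless once $X^n$ is conditioned upon.
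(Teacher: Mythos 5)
Your proof is correct and is essentially the paper's own argument: your ``conditioning reduces entropy'' detour, $I(M;Y^nT|Z^n)\le H(Y^nT|Z^n)-H(Y^nT|MX^nZ^n)=I(MX^n;Y^nT|Z^n)$, is exactly the paper's step of enlarging to $I(X^nM;Y^nT|Z^n)$, followed by the same two ingredients ($I(M;Y^n|X^nZ^n)=0$ and bounding the help term by $H(T)\le nR_h$), and the achievability observation (the helper ignores $M$) is also identical. The only cosmetic difference is that the paper bounds $I(X^nM;T|Y^nZ^n)\le H(T)$ directly, so it never needs your additive-channel fact $W^n=Y^n-X^n$ to kill $H(T|MX^nY^nZ^n)$, which you could likewise have skipped since that term is nonnegative anyway.
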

\begin{proof}
It is sufficient to show that the same converse still holds (for achievability, the helper can always ignore the message). To this end, note that \eqref{eq.RxH.p1} - \eqref{eq.RxH.p4} still hold since the independence of $T$ and $M$ plays no role there so that
\bal\notag
n R_s &\le I(M;Y^n T|Z^n) + 2n\epsilon\\
&\le I(X^n M;Y^n T|Z^n) + 2n\epsilon\\
&= I(X^n M;Y^n |Z^n) + I(X^n M;T|Y^n Z^n) +2n\epsilon\\
&\le I(X^n M;Y^n |Z^n) + H(T) +2n\epsilon\\
\label{eq.RxHM.p5}
&= I(X^n;Y^n |Z^n) + H(T) +2n\epsilon\\
\label{eq.RxHM.p6}
&\le n (C_{s0} + R_h +2\epsilon)
\eal
where \eqref{eq.RxHM.p5} is due to $I(M;Y^n|X^n Z^n)=0$, i.e., the independence of $M$ and $Y^n$ given $X^n$ and $Z^n$; \eqref{eq.RxHM.p6} follows from \eqref{eq.RxH.p8}-\eqref{eq.RxH.p10}.
\end{proof}

For the degraded Gaussian WTC with non-secure Rx help, the secrecy capacity is zero if $\sigma_V^2=0$, since the Ev has access to the same information as the Rx in this case, so no secrecy is possible. This implies that $C_s(\sigma_V^2)$ is a discontinuous function at $\sigma_V^2 =0$ for non-secure  help with any $R_h >0$:
\bal
\lim_{\sigma_V^2 \to 0^+} C_s(\sigma_V^2) = R_h >0,
\eal
while $C_s(0)=0$ (the help becomes useless in this case). This is in stark contrast to the no-help case where $C_{s0}(\sigma_V^2)$ is a continuous function for every $\sigma_V^2$, including $\sigma_V^2=0$, so that the help becomes especially important when $\sigma_V^2$ approaches 0, i.e., when the Tx-Ev link SNR  approaches that of the Tx-Rx link.

\section{Reversely-Degraded WTC With Rx Help}
\label{sec.Rev.Deg.Ch}

Let us now consider the reversely-degraded case  of the wiretap channel as in Fig. \ref{fig.RxH-RD}:
\bal
Z_i = X_i + V_i,\ Y_i = Z_i + \Delta W_i
\eal
where $\Delta W_i$ is an extra Rx noise, independent of the Ev noise $V_i$, so that the sequences $V^n$ and $\Delta W^n$ are i.i.d and independent of each other. Note that the total Rx noise is ${W}_i = V_i + \Delta W_i$ and its variance is
\bal
\sigma_{{W}}^2 = \sigma_{V}^2 + \sigma_{\Delta W}^2 \ge \sigma_{V}^2 > 0
\eal
We exclude the trivial case $\sigma_{V}^2 = 0$, for which the secrecy capacity is zero. It is well-known that, without help, the secrecy capacity of this channel is zero, $C_{s0}=0$. However, the availability of Rx help, whether or not securely, changes the situation dramatically.

\begin{figure}[t]
    \begin{center}
        \includegraphics[width=3.5in]{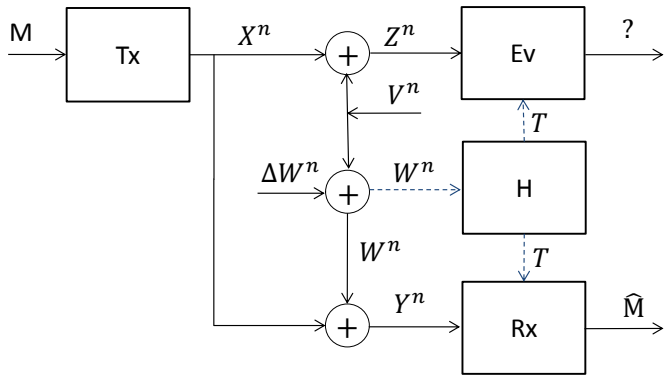}
    \end{center}
    \caption{Reversely-degraded wiretap channel with a rate-limited Rx help $T$. $\Delta W^n$ and $V^n$ are i.i.d. noise sequences independent of each other and of $M, X^n$; $X^n=X^n(M)$, $T=T(W^n)$, $H(T)\le n R_h$.}
    \label{fig.RxH-RD}
\end{figure}

\begin{thm}
\label{thm.Cs.RevDeg}
The secrecy capacity $C_{s}$ of the reversely-degraded WTC in Fig. \ref{fig.RxH-RD} (not necessarily Gaussian) with secure or non-secure Rx help of rate $R_h$ is bounded as follows
\bal
C_{s} \le  R_h
\eal
This holds with equality if the noises are Gaussian and $\sigma_V^2, \sigma_{\Delta W}^2 > 0$; if $\sigma_{\Delta W}^2 = 0$, then $C_{s} =  R_h$ if the help is secure and $C_{s} =  0$ otherwise.
\end{thm}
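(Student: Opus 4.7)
The plan is to reuse much of the apparatus from Theorem \ref{thm.Cs}, making two structural substitutions dictated by the reversely-degraded geometry. For the converse, I would run the inequality chain (\ref{eq.RxH.p1})--(\ref{eq.RxH.p8}) of Theorem \ref{thm.Cs} without assuming Gaussianity, arriving at $n R_s \le I(X^n; Y^n | Z^n) + H(T) + 2 n \epsilon$. The reversely-degraded Markov chain $X_i - Z_i - Y_i$ (valid because $Y_i = Z_i + \Delta W_i$ with $\Delta W_i$ independent of $(X_i,V_i)$) forces $p(y^n | x^n, z^n) = p(y^n | z^n)$ and hence $I(X^n; Y^n | Z^n) = 0$. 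Combined with $H(T) \le n R_h$ this yields $R_s \le R_h + 2\epsilon$, so $C_s \le R_h$, uniformly for secure and non-secure help (availability of $T$ at the Ev cannot raise an achievable secrecy rate).

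For achievability in the Gaussian case with $\sigma_V^2, \sigma_{\Delta W}^2 > 0$, the equality $C_{s0} = 0$ renders Phase 1 of the Theorem \ref{thm.Cs} construction empty, so the scheme collapses to flash-signalling Phase 2 alone: on a fraction $\tau$ of the block the helper sends a fine scalar quantization of $W_i = V_i + \Delta W_i$ at rate $R_h/\tau$, the Tx uses ordinary (non-wiretap) Gaussian coding with power $P$, and the Rx subtracts the quantized noise before decoding. As in \cite{Bross-20}, the Phase-2 rate is $(R_h/\tau)(1+o(1))$, giving an overall rate arbitrarily close to $R_h$. The secrecy analysis is the Phase-2 portion of (\ref{eq.AP.3})--(\ref{eq.AP.7}): using the Markov chain $M - X^n - Z^n W^n$ and independence of $W^n$ from $(X^n,V^n)$, the Phase-2 leakage rate is bounded uniformly in $\tau$ by $\frac{1}{2}\log(1+P/\sigma_V^2) < \infty$, so the overall leakage is at most $\tau$ times this constant and tends to $0$. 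This applies identically to secure and non-secure help, matching the converse.

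For the degenerate subcase $\sigma_{\Delta W}^2 = 0$, one has $Y^n = Z^n$. With non-secure help the Ev observes exactly the same pair $(Y^n,T)$ as the Rx, so any decoder that is reliable for the Rx is also reliable for the Ev; Fano's inequality applied to Ev then lower-bounds the leakage rate by $R_s - \epsilon$, forcing $C_s = 0$. With secure help the same Phase-2 flash scheme (now quantizing $V^n = W^n$) still achieves $R_h$, since the uniform leakage bound depends only on independence of the codebook from $V^n$ and not on $\sigma_{\Delta W}^2 > 0$; the universal converse furnishes the matching upper bound.

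The main obstacle I anticipate is ensuring that the Phase-2 leakage bound really is uniform in $\tau$ in this reversely-degraded regime, where the Ev is less noisy than the Rx and one might worry that a flash-signalling burst leaks unboundedly as $\tau \to 0$. The Markov-chain manipulation paralleling (\ref{eq.AP.3})--(\ref{eq.AP.7}) handles this cleanly by introducing and then eliminating $W^n$ via independence, reducing the leakage to the single-letter Tx-Ev mutual information $\frac{1}{2}\log(1+P/\sigma_V^2)$, a $\tau$-independent quantity controlled only by $P$ and $\sigma_V^2$.
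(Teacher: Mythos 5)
Your converse is exactly the paper's: reuse \eqref{eq.RxH.p1}--\eqref{eq.RxH.p7}, kill $I(X^n;Y^n|Z^n)$ via the Markov chain $X^n - Z^n - Y^n$, and conclude $C_s \le R_h$. The achievability architecture (drop Phase~1, burst-signal with flash Rx help and ordinary coding, show the Phase-2 leakage rate is bounded uniformly in $\tau$) is also the paper's. The problem is the leakage bound itself. You claim the chain \eqref{eq.AP.3}--\eqref{eq.AP.7} goes through verbatim ``via independence of $W^n$ from $(X^n,V^n)$,'' yielding $R_{l2} \le \frac{1}{2}\log(1+P/\sigma_V^2)$. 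But in the reversely-degraded channel the total Rx noise is $W^n = V^n + \Delta W^n$, so $W^n$ is \emph{not} independent of $V^n$ --- you even write $W_i = V_i + \Delta W_i$ two sentences earlier. Consequently the step that drops the conditioning on $W^n$ in \eqref{eq.AP.7} fails: conditioning on $W^n$ (which the Ev effectively obtains from the non-secure help $\hat W^n$) partially reveals the Ev's own noise $V^n$ and lets it be partially cancelled, so the leakage genuinely exceeds $\frac{1}{2}\log(1+P/\sigma_V^2)$. The correct computation keeps the conditioning and evaluates
\bal
h(V^n|V^n+\Delta W^n) = h(V^n)+h(\Delta W^n)-h(V^n+\Delta W^n),
\eal
which gives the bound $R_{l2} \le \frac{1}{2}\log\bigl(1+P/\sigma_V^2\bigr) + \frac{1}{2}\log\bigl(1+\sigma_V^2/\sigma_{\Delta W}^2\bigr)$, as in \eqref{eq.APRD.6}--\eqref{eq.APRD.11}. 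This is still finite and $\tau$-independent when $\sigma_{\Delta W}^2>0$, so the theorem survives, but your argument as written does not establish it, and it obscures precisely why $\sigma_{\Delta W}^2>0$ is essential: the extra term blows up as $\sigma_{\Delta W}^2 \to 0$, which is the analytic face of the fact that $C_s=0$ for non-secure help when $Y^n=Z^n$.

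The degenerate subcase $\sigma_{\Delta W}^2=0$ is handled correctly: with non-secure help the Ev sees the same $(Y^n,T)$ as the Rx so $C_s=0$, and with secure help the leakage is $n^{-1}I(M_2;Z^n|\mathcal{C}) \le \frac{1}{2}\log(1+P/\sigma_V^2)$ because the troublesome conditioning on $\hat W^n$ is absent --- which is consistent with, and should have flagged, the fact that the non-secure bound cannot be the same constant.
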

\begin{proof}
\textit{Converse:} to prove the converse, we do not assume that the noises are Gaussian and consider the case of secure Rx help (i.e. no Ev help). The case of non-secure help will follow, since the availability of help to the Ev cannot increase the secrecy rate. The proof follows the steps similar to those in Theorem \ref{thm.Cs}. In particular, we observe that \eqref{eq.RxH.p1}-\eqref{eq.RxH.p7} still hold for the reversely-degraded channel (since channel degradedness plays no role there), so that
\bal
\label{eq.RxH.RD.p1}
n R_s &\le I(X^n;Y^n|Z^n) + H(T) + 2n\epsilon\\
\label{eq.RxH.RD.p2}
&\le n(R_h + 2\epsilon),
\eal
where the last inequality is due to $I(X^n;Y^n|Z^n)=0$, which in turn follows from Markov chain $X^n - Z^n - Y^n$. Thus, $R_s \le R_h + \epsilon$ for any $\epsilon >0$ and therefore $R_s \le R_h$, as required.

\textit{Achievability:} to prove achievability, we assume that the noises are Gaussian and consider the case of non-secure Rx help (when the same help is also available to the Ev); the case of secure help will follow since the absence of help to the Ev cannot increase leakage rate and hence cannot decrease secrecy rate.

To this end, we use the same two-phase flash signaling as in Theorem \ref{thm.Cs} except that nothing is transmitted in Phase 1 and the whole message is transmitted in Phase 2 (without wiretap coding). To show that this provides an arbitrary-low leakage rate after time-sharing (which is equivalent to burst signaling of duration $\tau$ in this case), we show that the Phase 2 leakage rate $R_{l2}$ is uniformly bounded in $\tau$ (as before). To this end, we observe that
\bal
\label{eq.APRD.1}
R_{l2} &= n^{-1}I(M_2;Z^n \hat{W}^n|\mathcal{C})\\
&\le n^{-1}I(M_2;Z^n \hat{W}^n {{W}}^n|\mathcal{C})\\
\label{eq.APRD.3}
&= n^{-1}I(M_2;Z^n|{{W}}^n\mathcal{C})\\
\label{eq.APRD.4}
&\le n^{-1}I(X^n;Z^n|{W}^n\mathcal{C})\\
&= n^{-1}I(X^n;X^n +V^n|V^n + \Delta W^n,\mathcal{C})\\
\label{eq.APRD.6}
&= n^{-1}(h(X^n+V^n|V^n + \Delta W^n,\mathcal{C})\\ \notag
 &\qquad - h(V^n|X^n, V^n + \Delta W^n,\mathcal{C}))\\
\label{eq.APRD.7}
&\le n^{-1}(h(X^n+V^n) - h(V^n|V^n + \Delta W^n)\\
\label{eq.APRD.8}
&\le \frac{1}{2}\log\left(1+\frac{P}{\sigma_V^2}\right) +  \frac{1}{2}\log\left(1+\frac{\sigma_V^2}{\sigma_{\Delta W}^2}\right) < \infty
\eal
where we assumed that $\sigma_{\Delta W}^2 > 0$; \eqref{eq.APRD.1}-\eqref{eq.APRD.4} hold due to the same reasons as in the proof of Theorem \ref{thm.Cs}; \eqref{eq.APRD.7} holds since (i) conditioning cannot increase the entropy and (ii) $V^n, \Delta W^n$ are independent of $X^n, \mathcal{C}$; \eqref{eq.APRD.8} holds since (i) the entropy is maximized by Gaussian distribution and
\bal\notag
h(V^n&|V^n + \Delta W^n)\\
 &= h(V^n,V^n + \Delta W^n) - h(V^n + \Delta W^n)\\
\label{eq.APRD.10}
&= h(V^n) +h(\Delta W^n) - h(V^n + \Delta W^n)\\
\label{eq.APRD.11}
&= \frac{n}{2}\log\frac{\sigma_V^2}{\sigma_V^2+ \sigma_{\Delta W}^2} +  \frac{n}{2}\log(2\pi e \sigma_{\Delta W}^2)
\eal
where \eqref{eq.APRD.10} is due to the independence of $\Delta W^n$ and $V^n$. Thus, the total leakage rate (after time-sharing) is
\bal
R_{l}&= (1-\tau)0 + \tau R_{l2}\\
&\le  \frac{\tau}{2}\log\left(1+\frac{P}{\sigma_V^2}\right) +  \frac{\tau}{2}\log\left(1+\frac{\sigma_V^2}{\sigma_{\Delta W}^2}\right) \to 0
\eal
when $\tau \to 0$, as required (notice that the condition $\sigma_{\Delta W}^2 >0$ is essential here, as $\sigma_{\Delta W}^2 =0$  results in zero secrecy capacity for non-secure help). The overall secrecy rate (after time-sharing) is
\bal
\label{eq.Rs.RD}
R_{s}&= (1-\tau)0 + \tau R_{h}/\tau (1+o(1)) \to R_h
\eal
when $\tau \to 0$.

Let us now consider the case of  $\sigma_{\Delta W}^2 = 0$, which implies $Y^n=Z^n$. If the help is not secure, the same information is available to the Ev and Rx and hence no positive secrecy rate is achievable, $C_s=0$. However, if the help is secure, then the Rx has an extra information not available to the Ev. It is not difficult to see that the above converse still holds if $\sigma_{\Delta W}^2 = 0$. To prove achievability, we use the same signaling as above and show that the leakage rate $R_{l2}$ of Phase 2 is uniformly bounded:
\bal
\label{eq.APRD0.1}
R_{l2} &= n^{-1}I(M_2;Z^n|\mathcal{C})\\
&\le n^{-1}I(X^n;Z^n|\mathcal{C})\\
\label{eq.APRD0.3}
&\le n^{-1}(h(Z^n) - h(V^n))\\
&\le \frac{1}{2}\log\left(1+\frac{P}{\sigma_V^2}\right) = C_2< \infty
\eal
Thus, secrecy is guaranteed after time-sharing with $\tau \to 0$ and the achieved secrecy rate is as in \eqref{eq.Rs.RD}.
\end{proof}

It may feel counter-intuitive that $C_s =R_h >0$ for the reversely-degraded Gaussian WTC, even if the help is not secure, i.e. also available to the Ev, since, in this case, the Ev is getting more information than the Rx. However, one should also note that, even though the Ev has the right (public) "key" $T=\hat{W}^n$, it does not have the right "lock" $W^n$ to which this key applies and hence it cannot "unlock" it (i.e., cancel its own noise), unlike the legitimate Rx.

A related surprising observation follows from Theorem \ref{thm.Cs.RevDeg}: in the case of non-secure help, $C_s=0$ if $\sigma_W^2 = \sigma_V^2$ (i.e. $\sigma_{\Delta W}^2 =0$) but $C_s =R_h >0$ if $\sigma_W^2 > \sigma_V^2$, so that more noise at the legitimate Rx is actually better for secrecy in this case. This is due to the fact that the extra Rx noise $\Delta W_i \neq 0$ makes it impossible for the Ev to cancel its own noise using non-secure help $\hat{W}^n$ in the same way the Rx does (since $V_i \neq W_i$ in this case). However, if $\Delta W_i = 0$, then the Ev can do noise cancellation in the same way the Rx does, which results in $C_s=0$ and renders the help useless. This also implies that $C_s(\sigma_W^2)$ is a discontinuous function at $\sigma_W^2=\sigma_V^2$.

To summarize, the secrecy capacity $C_s$ of the degraded or reversely degraded Gaussian wiretap channel with Rx help of rate $R_h$ (secure or not) is given by
\bal
C_s = C_{s0} +R_h
\eal
if either $\sigma_W^2 \neq \sigma_V^2$ or else the help is secure, where, of course, $C_{s0}=0$ for the reversely-degraded case. Thus, not only the secrecy capacity is boosted by $R_h$ for the degraded case, but also the secrecy capacity is positive for the reversely-degraded case, where it is zero without help, and this positive secrecy capacity is achievable by burst signalling without any wiretap coding at all.

Similarly to the degraded WTC, Theorem \ref{thm.Cs.RevDeg} still holds even if  the helper \textsf{H} is aware of the message $M$ being transmitted, $T=T(W^n,M)$, so that there is no boost in the secrecy capacity due to the message being available to the helper.

\begin{prop}
Consider the reversely-degraded WTC with Rx help as in Theorem \ref{thm.Cs.RevDeg} and let the helper \textsf{H} be aware of the message being transmitted, i.e. $T=T(W^n,M)$. Then, Theorem \ref{thm.Cs.RevDeg} still holds.
\end{prop}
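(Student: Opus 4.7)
The plan is to mirror the approach used in Proposition 1, but end the chain with the reversely-degraded Markov structure instead of the $C_{s0}$ bound, since achievability is immediate (the helper may simply ignore $M$ and use the scheme from Theorem \ref{thm.Cs.RevDeg}).

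For the converse, I would assume secure help (the non-secure case follows by monotonicity, as in Theorem \ref{thm.Cs.RevDeg}). Starting from $nR_s=H(M)$, applying the weak secrecy constraint and Fano's inequality (these steps, namely \eqref{eq.RxH.p1}--\eqref{eq.RxH.p4}, do not use independence of $T$ and $M$ and hence remain valid here), I obtain
\bal
nR_s \le I(M;Y^n T|Z^n) + 2n\epsilon.
\eal
Then, following the trick from Proposition 1, I would replace $M$ by $(X^n,M)$ in the first argument (which only increases the mutual information), and split as
\bal
I(X^n M;Y^n T|Z^n) = I(X^n M;Y^n|Z^n) + I(X^n M;T|Y^n Z^n).
\eal
The second term is bounded by $H(T)\le nR_h$ as before. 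For the first term, the Markov chain $M-X^n-(Y^n,Z^n)$ still holds because, even when $T$ is a function of both $W^n$ and $M$, the channel outputs $(Y^n,Z^n)$ depend on $M$ only through $X^n$; hence $I(M;Y^n|X^n Z^n)=0$ and so $I(X^n M;Y^n|Z^n)=I(X^n;Y^n|Z^n)$.

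The decisive step, which distinguishes this proof from that of Proposition 1, is then to invoke the reverse degradedness: the Markov chain $X^n - Z^n - Y^n$ still holds (the helper knowing $M$ has no bearing on this physical chain from $X^n$ through $Z^n$ to $Y^n$), so $I(X^n;Y^n|Z^n)=0$. Combining everything yields $nR_s \le nR_h + 2n\epsilon$, and letting $\epsilon\to 0$ gives $R_s\le R_h$, matching the bound in Theorem \ref{thm.Cs.RevDeg}.

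I do not anticipate a serious obstacle here: the argument is essentially a cut-and-paste of Proposition 1's manipulation up through the $H(T)$ bound, followed by the same reverse-degradedness vanishing step used in the proof of Theorem \ref{thm.Cs.RevDeg}. The only subtle point worth stating explicitly is the fact that the Markov relations $M-X^n-(Y^n,Z^n)$ and $X^n-Z^n-Y^n$ are unaffected by the helper's knowledge of $M$, since the helper's output $T$ plays no role in those conditional independencies between the transmitter's and channel's variables.
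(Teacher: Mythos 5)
Your proposal is correct and follows essentially the same route as the paper: the paper's proof simply asserts that the chain ending in $nR_s \le I(X^n;Y^n|Z^n) + H(T) + 2n\epsilon$ survives when $T=T(W^n,M)$ (implicitly via the Proposition 1 manipulation you spell out) and then kills $I(X^n;Y^n|Z^n)$ by the Markov chain $X^n - Z^n - Y^n$, with achievability by having the helper ignore $M$. Your write-up just makes explicit the $I(X^nM;Y^nT|Z^n)$ step that the paper leaves implicit.
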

\begin{proof}
The converse follows since \eqref{eq.RxH.RD.p1}, \eqref{eq.RxH.RD.p2} still hold for $T=T(W^n,M)$. The achievability holds since the helper can always ignore the message.
\end{proof}

\section{Non-Degraded WTC With Rx Help}
\label{sec.Non-Deg.Ch}

Let us now consider the case where the channel is neither degraded nor reversely-degraded, as in Fig. \ref{fig.RxH-ND}:
\bal
Z_i = X_i + V_i,\ Y_i = X_i + W_i
\eal
where the noise sequences $V^n$ and $W^n$ are i.i.d Gaussian but possibly correlated with each other and the covariance matrix of $(W_i,V_i)$ is
\bal\notag
\label{eq.RWV}
\bR_{WV} &= \mathbb{E} (W_i, V_i) (W_i, V_i)'\\
 &= \left[
      \begin{array}{cc}
        \sigma_W^2 & r\sigma_W \sigma_V \\
        r\sigma_W \sigma_V & \sigma_V^2 \\
      \end{array}
    \right]
\eal
where $r$ is the normalized correlation coefficient, $|r| \le 1$, and $(\cdot)'$ means transposition. This correlation may be due to e.g. an external user's signal acting as the noise affecting the Rx and Ev. We further assume that its covariance matrix is not singular, i.e. the determinant $|\bR_{WV}| \neq 0$, which is equivalent to $|r| <1$. If $r=0$, then $V^n$ and $W^n$ are independent of each other.

It is well-known that, without help, this case can be equivalently reduced to either degraded or reversely-degraded case, since the Rx and Ev performance depends on the marginal distributions of $W^n$ and $V^n$, respectively, not on their joint distribution \cite{Bloch-11}. While this is still true for secure Rx help (no Ev help), it is no longer true for non-secure help since Ev performance now depends on both $V^n$ and $W^n$. Thus, the secrecy capacity of this channel can potentially be affected by correlation and does not follow from that of the degraded or reversely-degraded one. Yet, we show below that it is still $C_{s0} + R_h$, irrespective of $r$ (as long as $|r|<1$).

\begin{figure}[t]
    \begin{center}
        \includegraphics[width=3.5in]{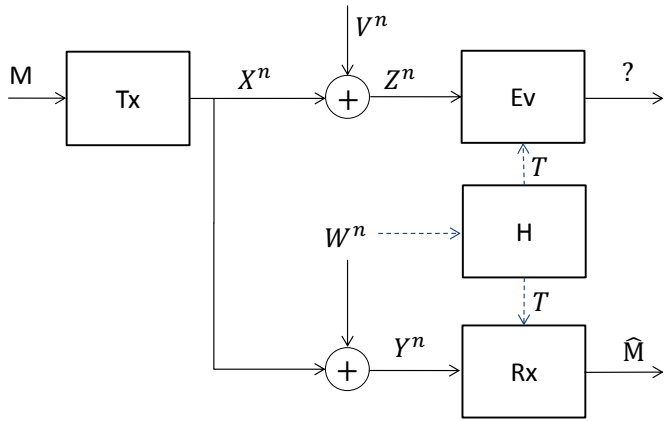}
    \end{center}
    \caption{Non-degraded Gaussian wiretap channel with a rate-limited Rx help $T$; noise sequences $W^n$ and $V^n$ are i.i.d. Gaussian (but possibly correlated with each other) and  independent of $M, X^n$; $\sigma_W^2, \sigma_V^2 >0$; $X^n=X^n(M)$, $T=T(W^n)$, $H(T)\le n R_h$.}
    \label{fig.RxH-ND}
\end{figure}

\begin{thm}
\label{thm.Cs.Non-Deg}
Consider the non-degraded Gaussian WTC as in Fig. \ref{fig.RxH-ND} with i.i.d. noise sequences correlated with each other as in \eqref{eq.RWV} and with secure or non-secure Rx help of rate $R_h$; let $\sigma_W^2, \sigma_V^2, P >0$. Its secrecy capacity $C_{s}$ is
\bal
C_{s} =  C_{s0}+ R_h
\eal
for any $|r| <1$.
\end{thm}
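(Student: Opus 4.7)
The plan is to adapt the two-pronged strategy of Theorem \ref{thm.Cs}: first a converse that bypasses the broken $X-Y-Z$ Markov chain that was used at step \eqref{eq.RxH.p9}, and then an achievability that reuses the two-phase time-sharing scheme but re-verifies the uniform-in-$\tau$ Phase 2 leakage bound in the presence of noise correlation.

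For the converse I would start from Fano plus weak secrecy to obtain $nR_s \le I(M;Y^n T) - I(M;Z^n T) + 2n\epsilon$, and then apply the chain-rule identity
\begin{align*}
I(M;Y^n T) - I(M;Z^n T) = [I(M;Y^n) - I(M;Z^n)] + [I(M;T|Y^n) - I(M;T|Z^n)].
\end{align*}
The second bracket is bounded by $I(M;T|Y^n) \le H(T) \le nR_h$. The first bracket is controlled by the standard Csiszar-Korner single-letterization for the memoryless WTC without help: letting $V_i=(Y^{i-1},Z_{i+1}^n)$ and $U_i=(M,V_i)$, the Csiszar sum identity rearranges $I(M;Y^n)-I(M;Z^n)$ as $\sum_i[I(U_i;Y_i|V_i)-I(U_i;Z_i|V_i)]$, and each summand is at most $\max_{p(U,X):\,U-X-YZ}[I(U;Y)-I(U;Z)] = C_{s0}$; for the Gaussian WTC under consideration this equals $[C_1-C_2]^+$ and depends only on the marginals of $(W,V)$, not on $r$. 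Summing yields $R_s \le C_{s0} + R_h + 2\epsilon$, which establishes the converse for secure Rx help and, by monotonicity, for non-secure help as well.

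For achievability I would reuse the Phase 1 / Phase 2 scheme of Theorem \ref{thm.Cs}. Phase 1 uses ordinary wiretap coding over the Gaussian WTC (whose secrecy capacity $C_{s0}$ depends only on the marginals of $W$ and $V$) for a fraction $1-\tau$ of the time; Phase 2 uses high-resolution scalar quantization $\hat{W}^n = Q(W^n)$ combined with ordinary (non-wiretap) channel coding to achieve a Phase 2 rate close to $R_h/\tau$. The only new element is a uniform bound on the Phase 2 leakage rate $R_{l2} = n^{-1} I(M_2; Z^n \hat{W}^n | \mathcal{C})$ in the presence of correlation. I would follow \eqref{eq.AP.3}-\eqref{eq.AP.7} verbatim up to $R_{l2} \le n^{-1} I(X^n; Z^n | W^n, \mathcal{C})$, and then apply the whitening decomposition $V_i = (r\sigma_V/\sigma_W) W_i + \tilde{V}_i$, with $\tilde{V}_i \sim \mathcal{N}(0, \sigma_V^2(1-r^2))$ independent of $W_i$. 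Substituting gives $I(X^n;Z^n|W^n,\mathcal{C}) = I(X^n; X^n+\tilde{V}^n | \mathcal{C}) \le \tfrac{n}{2}\log\bigl(1 + P/[\sigma_V^2(1-r^2)]\bigr)$, which is finite since $|r|<1$. The time-sharing argument of \eqref{eq.Rl}-\eqref{eq.Rs.RxH.Deg} then drives the overall leakage to zero and the overall secrecy rate to $C_{s0}+R_h$ as $\tau\to 0$; achievability for secure Rx help follows a fortiori.

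The main obstacle, as I see it, is precisely why the statement excludes $|r|=1$: if $r^2 = 1$ then $\sigma_V^2(1-r^2)$ collapses, the Phase 2 leakage bound diverges, and the Ev can in fact exploit the non-secure help $\hat{W}^n$ to reconstruct its own noise, mirroring the $\sigma_{\Delta W}^2 = 0$ degeneracy of Theorem \ref{thm.Cs.RevDeg}. A minor but genuine subtlety is that the Csiszar-Korner single-letterization is applied to the broadcast channel $p(Y_i,Z_i|X_i)$ whose joint law is governed by the full covariance $\bR_{WV}$; this is not a problem because the channel remains memoryless in the pair $(Y_i,Z_i)$, but it is the reason the converse cannot be outsourced to Theorem \ref{thm.Cs} via a degraded reduction that preserves only the marginals of $(W,V)$.
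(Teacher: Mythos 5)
Your proposal is correct, and its achievability part is essentially the paper's: the same two-phase time sharing, with the only new ingredient being a uniform-in-$\tau$ bound on the Phase 2 leakage. Your whitening decomposition $V_i=(r\sigma_V/\sigma_W)W_i+\tilde V_i$ gives $R_{l2}\le\tfrac{1}{2}\log\bigl(1+P/(\sigma_V^2(1-r^2))\bigr)$, while the paper bounds $I_0(X;ZW)\le h(X+V)+h(W)-h(V,W)\le\tfrac{1}{2}\log(1+P/\sigma_V^2)-\tfrac{1}{2}\log(1-r^2)$; both are finite exactly when $|r|<1$ (yours is in fact slightly tighter), so this part matches. Where you genuinely diverge is the converse: you rebuild it from scratch via the Csisz\'ar sum identity and the single-letter Csisz\'ar--K\"orner bound on $I(M;Y^n)-I(M;Z^n)$, plus $I(M;T|Y^n)\le H(T)\le nR_h$. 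The paper instead notes that with \emph{secure} Rx help the decoder's and eavesdropper's performances depend only on the marginals of $W^n$ and $V^n$, so the channel is stochastically equivalent to a degraded or reversely degraded one and the converses of Theorems \ref{thm.Cs} and \ref{thm.Cs.RevDeg} apply verbatim; the non-secure case then follows since giving the help to the Ev cannot increase the secrecy rate. So, contrary to your closing remark, the converse \emph{can} be outsourced to those theorems: the marginal-preserving reduction is legitimate precisely because it is invoked only for secure help, and monotonicity handles the rest. Your heavier route buys a self-contained argument that would survive settings where such a reduction is unavailable, at the cost of machinery the paper avoids. One bookkeeping point: your opening inequality $nR_s\le I(M;Y^nT)-I(M;Z^nT)+2n\epsilon$ uses the non-secure constraint $I(M;Z^nT)\le n\epsilon$, so as written it covers the non-secure case; for secure help you should subtract $I(M;Z^n)$ instead and use $I(M;Y^nT)-I(M;Z^n)=[I(M;Y^n)-I(M;Z^n)]+I(M;T|Y^n)$, which gives the same bound --- and note that monotonicity runs from secure to non-secure, not the reverse as your phrasing suggests.
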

\begin{proof}
\textit{Converse:} we consider first the case of secure Rx help (no Ev help). Note that, in this case, Ev's performance depends on $V^n$ only, not on $W^n$; likewise, Rx's performance depends on $W^n$ only, not on $V^n$. Hence, this channel can now be equivalently reduced to degraded or reversely-degraded case, for which the converse have been established in Theorem \ref{thm.Cs} or \ref{thm.Cs.RevDeg}, respectively, so that $R_s \le C_{s0} + R_h$. This argument does not apply for non-secure Rx help. However, since the availability of help to Ev cannot increase the secrecy rate, the same upper bound still holds. This establishes the converse for non-secure Rx help as well.

\textit{Achievability:} Likewise, we can argue that, in the case of secure Rx help, the achievability result of Theorem \ref{thm.Cs} or \ref{thm.Cs.RevDeg} apply. However, it is no longer the case for non-secure help. Furthermore, the achievability under secure help does not imply the achievability under non-secure help. To establish the latter, we use the signaling strategy of Theorem \ref{thm.Cs} if  $\sigma_W^2 < \sigma_V^2$ and of Theorem \ref{thm.Cs.RevDeg} otherwise. Note that no help is used in Phase 1 and hence its signalling rate, as well as $C_{s0}$, are not affected by the correlation of $V^n$ and $W^n$ and therefore the respective results of Theorem \ref{thm.Cs} and  \ref{thm.Cs.RevDeg} apply. Phase 2 rate in \eqref{eq.RxH.Ph2.rate} is not affected by the correlation either since it depends on $W^n$ only (not on $V^n$). To show that secrecy is guaranteed after the two-phase time sharing, we show that the leakage rate $R_{l2}$ of Phase 2 is uniformly bounded for any $\tau$:
\bal
R_{l2} &= n^{-1}I(M_2;Z^n \hat{W}^n|\mathcal{C})\\
&\le n^{-1}I(M_2;Z^n W^n|\mathcal{C})\\
\label{eq.Rx.ND.corr.3}
&\le n^{-1}I(X^n;Z^n W^n)\\
\label{eq.Rx.ND.corr.4}
&\le I_0(X;Z W)\\
&= h(X+V,W) - h(V,W)\\
&\le h(X+V) + h(W) - h(V,W)\\
\label{eq.Rx.ND.corr.6}
&\le \frac{1}{2}\log\left(1+\frac{P}{\sigma_V^2}\right)- \frac{1}{2}\log(1-r^2)= R_0 < \infty
\eal
where \eqref{eq.Rx.ND.corr.3} follows from Markov chain $(\mathcal{C}, M_2) - X^n - (Z^n,W^n)$; \eqref{eq.Rx.ND.corr.4} holds since the channel is memoryless; \eqref{eq.Rx.ND.corr.6} is due to
\bal
h(X+V)&\le \frac{1}{2}\log\left(2\pi e(P+\sigma_V^2)\right)\\ \notag
h(W,V)&= \frac{1}{2}\log\left((2\pi e)^2 |\bR_{WV}|\right)\\
&= \frac{1}{2}\log\left((2\pi e)^2 \sigma_W^2\sigma_V^2(1-r^2)\right)
\eal
Thus, the overall leakage rate after two-phase time sharing is arbitrarily low, as in \eqref{eq.Rl}, and the secrecy rate in \eqref{eq.Rs.RxH.Deg} is indeed achievable, as required.

\end{proof}

Note that, if $\sigma_W^2 \ge \sigma_V^2$, then $C_{s0}=0$ and $C_s = R_h$, i.e. if the Tx-Rx channel is weaker than the Tx-Ev channel, the secrecy capacity with Rx help is still positive (if $R_h>0$) and independent of $r$ (as long as $|r|<1$), even if the help is not secure. This also holds if $\sigma_W^2 = \sigma_V^2$, unlike the case of the reversely-degraded channel, where $C_s=0$ if $\sigma_W^2 = \sigma_V^2$ and the help is not secure. This is due to $W^n \neq V^n$ in the non-degraded channel (with non-singular correlation) which makes the public "key" $T=\hat{W}^n$ useful to the Rx only, but not to the Ev.

Similarly to the degraded and reversely-degraded WTCs, the same secrecy capacity results even if the helper is aware of the message being transmitted, $T=T(W^n,M)$.

\section{The Degraded WTC with Tx Help}
\label{sec.D.Tx}

Let us now consider the setting of Fig. \ref{fig.TxRxEvH} and extend Theorem 1 to the scenario where rate-limited help is available to the Tx, in addition to or instead of the Rx help. Unlike the Rx help case where the causality of help is immaterial (since the Rx starts decoding after the whole block of length $n$ is received), it becomes important for the Tx help setting. Thus, we distinguish between causal Tx help, whereby at time $i$ the Tx help is based on the Rx noise sequence up to time $i$, and non-causal Tx help, whereby the Tx help at time $i=1$ (the very beginning of the transmission) is based on the whole noise sequence $W^n$. Interestingly, the causality of Tx help has no impact on the secrecy capacity (this mimics the respective property of the no-Ev/no-secrecy channel capacity with Tx help in \cite{Lapidoth-20}).

\begin{figure}[t]
    \begin{center}
        \includegraphics[width=3.5in]{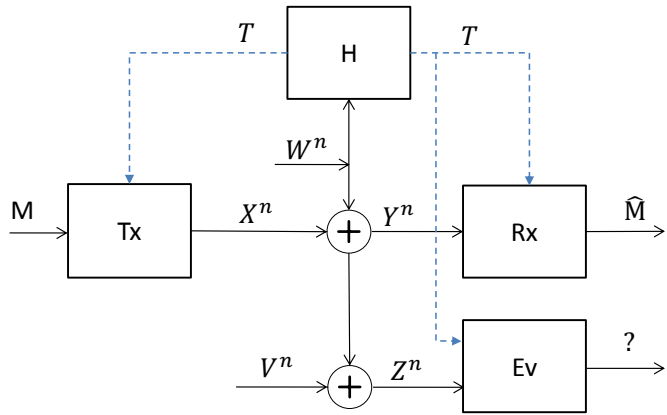}
    \end{center}
    \caption{Degraded wiretap channel with a rate-limited help $T$ at the Tx, Rx and Ev ($T$ is not available to the Ev if the help is secure). $W^n$ and $V^n$ are i.i.d. noise sequences, $\sigma_V^2 >0$; $V^n$ is independent of $W^n,\ X^n,\ M$; $X^n = X^n(M,T)$, $T=T(W^n),\ H(T) \le n R_h$.}
    \label{fig.TxRxEvH}
\end{figure}

\begin{thm}
\label{thm.TxH.Cs}
Consider the degraded Gaussian WTC in Fig. \ref{fig.TxRxEvH} with causal or non-causal Tx help of rate $R_h$, secure or non-secure, in addition to or instead of the same Rx help, and let $\sigma_V^2, P>0$. Its secrecy capacity $C_{s}$ satisfies
\bal
C_{s} \ge C_{s0} + R_h
\eal
where $C_{s0}$ is the secrecy capacity without help. This holds with equality if the help is not secure.
\end{thm}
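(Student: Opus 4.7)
\textbf{Achievability ($C_s \ge C_{s0}+R_h$).} When the Rx also holds $T$, the bound is immediate from Theorem \ref{thm.Cs} by discarding the Tx copy of the help. For the Tx-help-only case I would reuse the two-phase time-sharing of Theorem \ref{thm.Cs}, replacing its Phase-2 Rx-help scheme by a Tx-help quantize-and-precode construction drawn from \cite{Bross-20}, \cite{Lapidoth-20}: during Phase 2 (fraction $\tau$) scalar-quantize $W_i\to\hat W_i$ at resolution $R_h/\tau$ and transmit $X_i=U_i-\alpha\hat W_i$, so the Rx sees $U_i+(W_i-\alpha\hat W_i)$ with the quantization residual as the only effective noise, achieving per-symbol rate $R_h(1+o(1))/\tau$; Phase 1 (fraction $1-\tau$) applies standard wiretap coding at rate $C_{s0}-\epsilon$. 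Letting $\tau\to 0$ gives overall rate $\to C_{s0}+R_h$. The Phase-2 leakage analysis copies Theorem \ref{thm.Cs}:
\begin{align*}
R_{l2}\le n^{-1}I(M_2;Z^nW^n|\mathcal{C}) =n^{-1}I(M_2;Z^n|W^n,\mathcal{C}) \le n^{-1}I(X^n;Z^n|W^n,\mathcal{C})\le\tfrac12\log(1+P/\sigma_V^2),
\end{align*}
and the Markov step $M_2-X^n-Z^n$ given $W^n,\mathcal{C}$ still holds in the Tx-help setting because conditioning on $W^n$ fixes $T=T(W^n)$, reducing the encoder to a (stochastic) function of $M_2$ alone. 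Time-sharing then drives the total leakage to an arbitrary $\delta$ as $\tau\to 0$.

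\textbf{Converse (non-secure help).} Since giving the Rx the help cannot reduce the secrecy capacity, it suffices to upper-bound the joint Tx/Rx help case. I would adapt Theorem \ref{thm.Cs}'s converse to the encoder $X^n=X^n(M,T)$: steps \eqref{eq.RxH.p1}-\eqref{eq.RxH.p4} carry over verbatim (they use only secrecy and Fano), giving $nR_s\le I(M;Y^n|Z^n,T)+2n\epsilon$. Inserting $X^n$ and using the conditional-independence identity $M\perp Y^n\mid (T,X^n,Z^n)$---a short factorization of $p(M,W^n\mid T,X^n,Z^n)$ that rests on $M\perp W^n\mid T$---I would derive
\begin{align*}
I(M;Y^n|Z^n,T)\le I(X^n;Y^n|Z^n,T) \le I(X^n,T;Y^n|Z^n) = I(X^n;Y^n|Z^n)+I(T;Y^n|X^n,Z^n),
\end{align*}
with $I(T;Y^n|X^n,Z^n)\le H(T)\le nR_h$. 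A single-letterization of $I(X^n;Y^n|Z^n)$ to $nC_{s0}$ then yields the desired $nR_s\le n(C_{s0}+R_h)+2n\epsilon$.

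\textbf{Main obstacle.} The delicate step is precisely the bound $I(X^n;Y^n|Z^n)\le nC_{s0}$ in the Tx-help setting. Unlike in Theorem \ref{thm.Cs}, where $X^n$ is independent of the noise and the standard chain-rule single-letterization collapses to $\sum I(X_i;Y_i|Z_i)\le nC_{s0}$, here $X^n$ depends on $W^n$ through $T$; a direct Gaussian calculation with $X_i=U_i-W_i$ under the power constraint shows $I(X;Y|Z)$ can strictly exceed $C_{s0}$. The resolution must exploit the Markov chain $X^n-T-W^n$ so that the noise-induced excess in $I(X^n;Y^n|Z^n)$ is absorbed into the $H(T)\le nR_h$ slack from the second term above. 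I expect this to be handled either by conditioning on $T$ upfront (given $T$, the encoder output is noise-independent, the standard single-letter bound applies to the conditional channel, and a rate-distortion argument ties the $T$-averaged conditional capacity back to $C_{s0}+R_h$) or via an auxiliary-random-variable argument in the Csiszar-Korner mold with $U_i=(M,Y^{i-1},Z_{i+1}^n,T)$. Verifying the bound for arbitrary (not merely symbol-wise) help $T$ is the principal technical subtlety; every other step follows Theorem \ref{thm.Cs} nearly verbatim.
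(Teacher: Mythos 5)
Your achievability argument is essentially the paper's: two-phase time sharing with standard wiretap coding in Phase~1 and the Lapidoth--Marti quantize-and-presubtract scheme in Phase~2, with the Phase-2 leakage bounded by conditioning on $W^n$ so that the encoder reduces to a function of $M_2$. That part is sound and matches the paper's proof.

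The converse, however, contains a genuine gap, and you have located it yourself: the step $I(X^n;Y^n|Z^n)\le nC_{s0}$ is exactly what fails, because $X^n$ is correlated with $W^n$ through $T$, so both the memoryless single-letterization and the single-letter bound break down. Your decomposition $I(X^nT;Y^n|Z^n)=I(X^n;Y^n|Z^n)+I(T;Y^n|X^nZ^n)$ cannot be repaired by bounding the two terms separately: the second is indeed at most $H(T)\le nR_h$, but there is no mechanism by which an excess of the first term over $nC_{s0}$ is ``absorbed'' into the slack of the second --- once the terms are bounded independently, their sum can exceed $n(C_{s0}+R_h)$. The paper avoids this entirely by conditioning on $T$ from the outset and never appealing to the no-help single-letter bound: starting from \eqref{eq.TxH.p6} it uses degradedness to write $I(X^n;Y^n|Z^nT)=I(X^n;Y^n|T)-I(X^n;Z^n|T)$, expands into differential entropies, and isolates $h(W^n+V^n|T)-h(W^n|T)$, which produces the term $\tfrac{n}{2}\log\frac{\sigma_W^2+\sigma_V^2}{\sigma_W^2}$ plus $H(T)\le nR_h$ --- i.e., the $+R_h$ enters through the noise entropies, not through $I(T;Y^n|X^nZ^n)$. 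The remaining piece $h(Y^n|T)-h(Z^n|T)$ is the real technical content (Lemma \ref{lemma.DH}): it needs the nontrivial bound $h(Y^n|T)\le\tfrac{n}{2}\log(2\pi e(\sigma_W^2+P))$ from \cite{Lapidoth-20} (nontrivial precisely because $X^n$ and $W^n$ are dependent), the entropy power inequality to lower-bound $h(Z^n|T=t)$ in terms of $h(Y^n|T=t)$, convexity of log-sum-exp to average over $t$, and monotonicity of $x-\log(2^x+c)$. Neither of your two speculative resolutions supplies this; in particular, the claim that ``given $T$, the standard single-letter bound applies to the conditional channel'' is false, since conditioned on $T=t$ the noise $W^n$ is no longer i.i.d.\ Gaussian. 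The converse therefore remains unproven in your proposal.
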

\begin{proof}
We consider the case of non-secure help, from which the case of secure help follows (since the availability of help to the Ev cannot increase the secrecy rate). The achievability is based on the two-phase flash signaling as in Theorem 1, with noise pre-cancellation at the Tx (a.k.a. dirty-paper coding, as in \cite{Lapidoth-20}) in Phase 2. The converse is based on the functional relationship between the involved random variables as well as the secrecy constraint, in addition to the reliability and power constraints.

\textit{Converse}: we prove the converse when the same non-causal non-secure help $T$ is available to all ends, i.e. the Tx, Rx and Ev as in Fig. \ref{fig.TxRxEvH}. Clearly, the same converse will hold if no Rx help is available or if the help is causal. Using the appropriate Markov chain and functional relationships between the random variables, in addition to the secrecy and reliability constraints, note the following:
\bal
\label{eq.TxH.p1}
n R_s &= H(M)\\
\label{eq.TxH.p2}
&\le H(M|Z^n T) + n\epsilon\\
&= I(M;Y^n|Z^n T) + H(M|Y^nZ^nT) + n\epsilon\\
\label{eq.TxH.p4}
&\le I(M;Y^n|Z^n T) + 2n\epsilon\\
\label{eq.TxH.p5}
&\le I(X^n;Y^n|Z^n T) + 2n\epsilon\\
\label{eq.TxH.p6}
&= I(X^n;Y^n |T) - I(X^n;Z^n|T) +2n\epsilon\\
\label{eq.TxH.p7}
&= h(Y^n|T) - h(Y^n|X^n T) - [h(Z^n|T) - h(Z^n|X^n T)] +2n\epsilon\\
\label{eq.TxH.p8}
&= h(W^n+V^n|T) - h(W^n|T) + h(Y^n|T) - h(Z^n|T) +2n\epsilon\\
\label{eq.TxH.p9}
&\le \frac{n}{2}\log(2\pi e (\sigma_V^2 +\sigma_W^2)) + I(W^n;T) - h(W^n) + h(Y^n|T) - h(Z^n|T) +2n\epsilon\\
\label{eq.TxH.p10}
&= \frac{n}{2}\log\frac{\sigma_V^2 +\sigma_W^2}{\sigma_W^2} + H(T) + h(Y^n|T) - h(Z^n|T) +2n\epsilon\\
\label{eq.TxH.p11}
&\le nR_h + \frac{n}{2}\log\frac{\sigma_V^2 +\sigma_W^2}{\sigma_W^2}\frac{\sigma_W^2 +P}{\sigma_W^2 +\sigma_V^2 + P} +2n\epsilon\\
&= nR_h +  \frac{n}{2}\log\left(1+\frac{P}{\sigma_W^2}\right) - \frac{n}{2}\log\left(1+\frac{P}{\sigma_V^2+\sigma_W^2}\right) +2n\epsilon\\
\label{eq.TxH.p13}
&= n(R_h + C_{s0} + 2\epsilon)
\eal
where \eqref{eq.TxH.p2} follows from the secrecy constraint $I(M;Z^n T) \le n\epsilon$; \eqref{eq.TxH.p4} follows from Fano inequality (due to the reliability constraint) $H(M|Y^nZ^n T) = H(M|Y^n T) \le n \epsilon$;
\eqref{eq.TxH.p5} and \eqref{eq.TxH.p6} follow from Markov chain $M - X^n - Y^n - Z^n$ conditional on $T$; \eqref{eq.TxH.p8} is due to the independence of $X^n$ and $(W^n,V^n)$ conditional on $T$;
\eqref{eq.TxH.p9} follows since conditioning cannot increase entropy;
\eqref{eq.TxH.p10} is due to $I(W;T)=H(T)$; \eqref{eq.TxH.p11} follows from Lemma \ref{lemma.DH} below. Since \eqref{eq.TxH.p13} holds for any $\epsilon >0$, it follows that $C_s \le C_{s0} + R_h$, as desired. Clearly, the same inequality holds if $T$ is not available to the Rx.
\begin{lemma}
\label{lemma.DH}
The following inequality holds in the considered setting:
\bal
\Delta h = h(Y^n|T) - h(Z^n|T) \le \frac{n}{2}\log\frac{\sigma_W^2 +P}{\sigma_W^2 +\sigma_V^2 + P}
\eal
\begin{proof}
It has been proved in \cite[eq. (46)]{Lapidoth-20} that
\bal
\label{eq.HYT.ineq}
h(Y^n|T) \le \frac{n}{2}\log(2\pi e(\sigma_W^2 +P))
\eal
(the proof is not trivial since $X^n$ and $W^n$ are \textit{not} independent, due to help $T=T(W^n)$). To bound $h(Z^n|T)$ likewise, note that
\bal
\label{eq.L1.hZnT}
h(Z^n|T) = \sum_t p_T(t) h(Y^n+V^n|T=t)
\eal
where $p_T(t)$ is the distribution of $T$. Using the entropy power inequality
\bal
2^{\frac{2}{n}h(Y^n+V^n|T=t)} \ge 2^{\frac{2}{n}h(Y^n|T=t)} + 2^{\frac{2}{n}h(V^n|T=t)}
\eal
it follows that
\bal
h(Y^n+V^n|T=t) \ge \frac{n}{2} \log(2^{\frac{2}{n} h(Y^n|T=t)} + 2\pi e \sigma_V^2)
\eal
and hence
\bal
h(Z^n|T) &\ge \frac{n}{2} \log\left(2^{\frac{2}{n}\sum_t p_T(t) h(Y^n|T=t)} + 2\pi e \sigma_V^2\right)\\ \notag
 &= \frac{n}{2} \log\left(2^{\frac{2}{n}h(Y^n|T)} + 2\pi e \sigma_V^2\right)
\eal
where the inequality is due to the convexity of the log-sum-exp function \cite[p. 72]{Boyd-04}. Finally,
\bal
\Delta h &\le h(Y^n|T) - \frac{n}{2} \log\left(2^{\frac{2}{n}h(Y^n|T)} + 2\pi e \sigma_V^2\right)\\
&\le \frac{n}{2}\log(2\pi e(\sigma_W^2 +P))  - \frac{n}{2} \log\left(2^{\log(2\pi e(\sigma_W^2 +P))} + 2\pi e \sigma_V^2\right)\\
\label{eq.L1.dh}
&= \frac{n}{2}\log\frac{\sigma_W^2 +P}{\sigma_W^2 +\sigma_V^2 + P}
\eal
as required, where the inequality is due to \eqref{eq.HYT.ineq} and $f(x) = x - \log(2^x + c)$ being an increasing function of $x$ for any $c>0$.
\end{proof}
\end{lemma}

\begin{figure}[t]
    \begin{center}
        \includegraphics[width=4.5in]{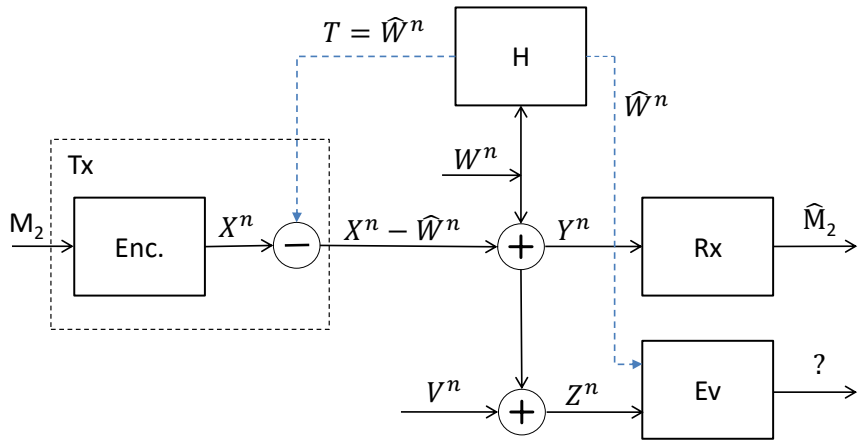}
    \end{center}
    \caption{Phase 2 signalling for the degraded WTC: the causal help $T$ is a scalar-quantized noise $\hat{W}^n$ , $\hat{W}_i=Q(W_i)$, pre-subtracted at the Tx; $X^n=X^n(M_2)$ is a codeword from i.i.d.-generated codebook, as in \cite{Lapidoth-20}.}
    \label{fig.TxEvH-A}
\end{figure}

\textit{Achievability}:
We consider the case of causal help being available to the Tx and Ev but not to the Rx. This will also establish achievability when the same help is also available to the Rx or/and when Tx help is non-causal (since adding Rx help or removing causality constraint cannot decrease achievable rates). Similarly to Theorem 1, we use a two-phase signalling, where Phase 1 of duration $(1-\tau)$ makes use of no-help regular wiretap coding and thus achieves the secrecy rate $C_{s0} - \epsilon$ for any $\epsilon >0$. Phase 2 of duration $\tau$ is the same as in \cite{Lapidoth-20}, which makes use of regular (no-wiretap) coding and pre-substraction of the scalar-quantized noise (available via the rate-limited help link) at the Tx, as shown in Fig. \ref{fig.TxEvH-A}:
\bal\notag
Y_i &= X_i - \hat{W}_i + W_i\\
Z_i &= Y_i + V_i
\eal
where $X^n = X^n(M_2)$ using i.i.d.-generated codebook $\mathcal{C}$, $T=\hat{W}^n$ is a scalar-quantized noise, $\hat{W}_i=Q(W_i)$, where the quantizer uses $L=\lfloor2^{R_h/\tau}\rfloor $ levels for each sample, which require the average rate $\tau \log(L) \le R_h$ to be transmitted over the help link\footnote{An alternative Phase 2 strategy using a simple lattice code with a uniform scalar quantizer is proposed in \cite{Merhav-21}.}. For further use, note that $V^n\ \bot\ (W^n, \hat{W}^n, X^n, M_2)$ and $(W^n, \hat{W}^n)\ \bot\ (V^n, X^n, M_2)$, where $\bot$ means statistical independence, so that the following Markov chains hold:
\bal
(M_2, \mathcal{C}) - X^n - Y^n  - Z^n;\ (M_2, \mathcal{C}) - X^n - (Z^n, W^n, \hat{W}^n)
\eal
Following \cite[eq. (24)]{Lapidoth-20}, this Phase 2 signalling achieves the rate arbitrary close to
\bal
\label{eq.TxH.Ph2.rate}
\frac{R_h}{\tau} + \frac{1}{2}\log\left(2^{-2R_h/\tau} + \alpha_W P (1-2^{-R_h/\tau})^2 [1+o(1)]\right) = \frac{R_h}{\tau} [1+o(1)]
\eal
where $\alpha_W=2(\pi\sqrt{3}\sigma_W^2)^{-1}$ and $o(1)\to 0$ as as $\tau \to 0$. Thus, the overall two-phase signalling rate (after time sharing) is
\bal
\label{eq.Rs.TxH.Deg}
(1-\tau)(C_{s0} - \epsilon) &+ \tau R_h/\tau(1+o(1)) \to C_{s0} +R_h -\epsilon
\eal
for any $\epsilon >0$, as $\tau \to 0$.

It remains to show that this rate is indeed secure, i.e. the information leakage rate to the Ev is arbitrary small. This is clearly the case in Phase 1 since regular wiretap coding is used in this phase so that its leakage rate is $R_{l1} = n^{-1} I(M_1; Z^n) \le \delta$ for any $\delta >0$ and sufficiently-large $n$.
To see that secrecy is guaranteed after two-phase time sharing (even though no wiretap coding is used in Phase 2), we show that Phase 2 leakage rate is uniformly bounded for any $\tau$:
\bal
R_{l2} &= n^{-1}I(M_2;Z^n \hat{W}^n|\mathcal{C})\\
\label{eq.TH.2}
&\le n^{-1}I(X^n;Z^n \hat{W}^n|\mathcal{C})\\
\label{eq.TH.3}
&\le n^{-1}I(X^n;Z^n\hat{W}^n)\\
\label{eq.TH.4}
&\le n^{-1}I(X^n;Z^n W^n)\\
\label{eq.TH.5}
&\le I_0(X;Z W)\\
&= I_0(X;X+V)\\
&\le \frac{1}{2}\log\left(1+\frac{P}{\sigma_V^2}\right) = C_2 < \infty
\eal
where \eqref{eq.TH.2} is due to Markov chain $M_2 - X^n - Z^n \hat{W}^n$; \eqref{eq.TH.3} is due to Markov chain $\mathcal{C} - X^n - Z^n \hat{W}^n$; \eqref{eq.TH.4} is due to $\hat{W}^n = Q(W^n)$; \eqref{eq.TH.5} holds since the channel is memoryless; $I_0$ is the mutual information induced by input $X$ with the distribution $p_0(x)=n^{-1}\sum_i p_{x_i}(x)$.

Thus, the overall leakage rate after two-phase time sharing is
\bal
\label{eq.TH.Rl}
R_l= (1-\tau)R_{l1} + \tau R_{l2} \le (1-\tau)\delta + \tau C_2 \to \delta
\eal
as $\tau \to 0$, for any $\delta >0$, as required.
\end{proof}

Note that the availability of the Rx help, in addition to the Tx help, does not increase the secrecy capacity (provided the help $T$ is the same in both cases) so that one link can be omitted without affecting the capacity.

Similarly to the Rx help case, if $\sigma_V^2=0$ and the Tx (or joint Tx/Rx) help is not secure, then the secrecy capacity is zero, since the Ev has access to exactly the same information as the Rx so that no secrecy is possible, i.e. $C_s(\sigma_V^2)$ is a discontinuous function at $\sigma_V^2 =0$:
\bal
\lim_{\sigma_V^2 \to 0^+} C_s(\sigma_V^2) = R_h >0
\eal
while $C_s(0)=0$.

\section{The Reversely-Degraded WTC with Tx Help}
\label{sec.Rev-Deg.TxH}

Let us consider the reversely-degraded Gaussian WTC, as in Fig. \ref{fig.TxRxEvH-RD}, with Tx help, in addition to or instead of the Rx help ($T$ is not available to the Ev if help is secure). While its secrecy capacity is zero without help, this is not the case when help is present, even if it is not secure, as the following Theorem shows.

\begin{figure}[t]
    \begin{center}
        \includegraphics[width=3.5in]{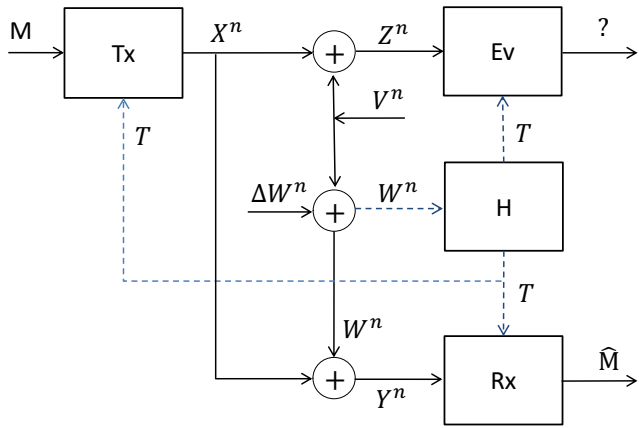}
    \end{center}
    \caption{Reversely-degraded wiretap channel with a rate-limited help $T$ at the Tx, Rx and Ev. $\Delta W^n$ and $V^n$ are i.i.d. noise sequences, $\sigma_V^2,\sigma_{\Delta W}^2 >0$; $V^n$, $\Delta W^n$ and $M$ are independent of each other; $X^n = X^n(M,T)$, $T=T(W^n),\ H(T) \le n R_h$.}
    \label{fig.TxRxEvH-RD}
\end{figure}

\begin{thm}
\label{thm.TxH.Cs.RD}
Consider the reversely-degraded Gaussian WTC with causal or non-causal Tx help of rate $R_h$, secure or not, in addition to or instead of the same Rx help, as in Fig. \ref{fig.TxRxEvH-RD}, and let $\sigma_V^2, \sigma_{\Delta W}^2, P>0$. Its secrecy capacity $C_{s}$ satisfies
\bal
C_{s} \ge R_h
\eal
and this holds with equality if the help is not secure.
\end{thm}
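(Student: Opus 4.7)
The plan is to adapt the converse of Theorem \ref{thm.TxH.Cs} to exploit the reversely-degraded structure while reusing the burst-signaling scheme of Theorem \ref{thm.Cs.RevDeg}, now with the Phase 2 dirty-paper strategy of Theorem \ref{thm.TxH.Cs} for the transmission itself. Since the presence of help at the Ev cannot raise the secrecy rate, I will establish the upper bound $C_s\le R_h$ under non-secure help and the matching lower bound $C_s\ge R_h$ under the same (weaker) assumption, so that both parts of the statement are covered simultaneously.

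For the converse under non-secure help, the first steps \eqref{eq.TxH.p1}--\eqref{eq.TxH.p5} of the Theorem \ref{thm.TxH.Cs} proof apply verbatim, since they use only the secrecy and reliability constraints together with the Markov chain $M-X^n-(Y^n,Z^n)\mid T$, none of which invoke degradedness. This yields $nR_s\le I(X^n;Y^n|Z^n T)+2n\epsilon$. Using the reversely-degraded identity $Y^n=Z^n+\Delta W^n$, together with the fact that $(X^n,Z^n,T)$ determines $V^n=Z^n-X^n$ and that $X^n\bot\Delta W^n\mid V^n,T$ (because $X^n$ is a function of $M$ and $T$ while $M\bot(V^n,\Delta W^n)$), I obtain
\bal
I(X^n;Y^n|Z^n T)=h(\Delta W^n|Z^n T)-h(\Delta W^n|V^n T).
\eal
Bounding the first term by $h(\Delta W^n)$ and observing that $h(\Delta W^n)-h(\Delta W^n|V^n T)=I(\Delta W^n;V^n,T)=I(\Delta W^n;T|V^n)=H(T|V^n)$, where the middle equality uses $V^n\bot\Delta W^n$ and the last uses that $T$ is a deterministic function of $V^n+\Delta W^n$, I conclude $I(X^n;Y^n|Z^n T)\le H(T|V^n)\le H(T)\le nR_h$, closing the bound.

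For achievability I will use two-phase time sharing as in Theorem \ref{thm.Cs.RevDeg}: Phase 1 (fraction $1-\tau$) is idle, and Phase 2 (fraction $\tau$) runs the dirty-paper flash-signaling scheme of Theorem \ref{thm.TxH.Cs}, where the Tx pre-subtracts the scalar-quantized noise $\hat W^n=Q(W^n)$ and applies regular (non-wiretap) coding as in \cite{Lapidoth-20}, achieving a per-phase rate arbitrarily close to $R_h/\tau$ via \eqref{eq.TxH.Ph2.rate}. After time sharing, the overall rate tends to $R_h$ as $\tau\to 0$. Since Phase 1 leaks nothing, only the Phase 2 leakage $R_{l2}=n^{-1}I(M_2;Z^n\hat W^n|\mathcal{C})$ must be uniformly bounded in $\tau$, and the chain \eqref{eq.TH.2}--\eqref{eq.TH.5} from Theorem \ref{thm.TxH.Cs} (which does not rely on degradedness) already delivers $R_{l2}\le\tfrac12\log(1+P/\sigma_V^2)=C_2<\infty$, so $R_l\le\tau C_2\to 0$, as required.

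The main obstacle is that the simple Markov chain $X^n-Z^n-Y^n\mid T$ one might hope to exploit, mirroring the Rx-help reversely-degraded converse in Theorem \ref{thm.Cs.RevDeg}, \emph{fails} here because $T=T(V^n+\Delta W^n)$ couples $V^n$ and $\Delta W^n$ conditionally on $T$; the direct entropy computation above sidesteps this, and it is precisely the identity $I(\Delta W^n;T|V^n)=H(T|V^n)$ that isolates the help budget $R_h$ as the correct rate bound.
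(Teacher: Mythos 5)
Your converse is essentially the paper's: after reusing \eqref{eq.TxH.p1}--\eqref{eq.TxH.p5}, the paper also reduces $I(X^n;Y^n|Z^nT)$ to $h(\Delta W^n|Z^n,T)-h(\Delta W^n|V^n,T)$ via the conditional independence $X^n\bot(V^n,\Delta W^n)\mid T$, and closes with $I(\Delta W^n;T|V^n)\le H(T)$; your observation that this quantity equals $H(T|V^n)$ exactly is a harmless refinement. The achievability architecture (idle Phase~1, dirty-paper flash signaling in Phase~2) also matches the paper.

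There is, however, one genuine flaw in your leakage bound. You claim that the chain \eqref{eq.TH.2}--\eqref{eq.TH.5} from Theorem~\ref{thm.TxH.Cs} ``does not rely on degradedness'' and already gives $R_{l2}\le\tfrac12\log(1+P/\sigma_V^2)=C_2$. The four numbered steps indeed carry over, but the concluding step of that chain, $I_0(X;Z,W)=I_0(X;X+V)$, does not: it uses the independence of $V$ and $W$, which holds in the degraded channel but fails in the reversely-degraded one, where $W^n=V^n+\Delta W^n$ so that $V^n$ and the help $\hat W^n=Q(W^n)$ are correlated. Redoing the computation, $I_0(X;Z,W)=h(X+V|W)-h(V|W)\le h(X+V)-h(V)+I(V;W)$, which yields the paper's constant $C_2'=\tfrac12\log(1+P/\sigma_V^2)+\tfrac12\log(1+\sigma_V^2/\sigma_{\Delta W}^2)$ rather than $C_2$. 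The conclusion (uniform boundedness of $R_{l2}$ in $\tau$, hence $R_l\to 0$) survives, but only because $\sigma_{\Delta W}^2>0$; your version of the argument both states an incorrect constant and obscures why that hypothesis is needed on the achievability side. Note also the mild irony that your own converse explicitly flags the coupling between $V^n$ and $\Delta W^n$ induced by $T=T(V^n+\Delta W^n)$ as ``the main obstacle,'' yet the achievability part overlooks the very same coupling.
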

\begin{proof}
It is sufficient to consider the case of non-secure help since the case of secure one follows from it.

\textit{Converse}: we prove the converse when the same (non-secure non-causal) help $T$ is available to all ends, i.e. the Tx, Rx and Ev as in Fig. \ref{fig.TxRxEvH-RD}. Clearly, the same converse will hold if no Rx help is available or if the help is causal. First, note that \eqref{eq.TxH.p1}-\eqref{eq.TxH.p5} still hold, since channel degradedness plays no role there. Therefore,
\bal\notag
n R_s &= H(M)\\
\label{eq.RD.TxH.p2}
&\le I(X^n;Y^n|Z^n T) + 2n\epsilon\\
\label{eq.RD.TxH.p3}
&= I(X^n;X^n+V^n+\Delta W^n|X^n+V^n,T) +2n\epsilon\\
\label{eq.RD.TxH.p4}
&= I(V^n;\Delta W^n|X^n+V^n,T) +2n\epsilon\\
\label{eq.RD.TxH.p5}
&= h(\Delta W^n|X^n+V^n,T)- h(\Delta W^n|X^n,V^n,T) +2n\epsilon\\
\label{eq.RD.TxH.p6}
&\le h(\Delta W^n|T)- h(\Delta W^n|V^n,T) +2n\epsilon\\
\label{eq.RD.TxH.p7}
&\le h(\Delta W^n|T)- h(\Delta W^n) +H(T) +2n\epsilon\\
\label{eq.RD.TxH.p8}
&\le H(T) + 2n\epsilon
\eal
where \eqref{eq.RD.TxH.p6} follows from Markov chain $X^n - T - (V^n,\Delta W^n)$ (i.e., conditional independence of $X^n$ and $(V^n,\Delta W^n)$ given $T$) and since conditioning cannot increase the entropy; \eqref{eq.RD.TxH.p7} is due to
\bal
h(\Delta W^n|V^n,T) \ge h(\Delta W^n)- H(T)
\eal
which in turn follows from
\bal
I(\Delta W^n;T|V^n) &= h(\Delta W^n|V^n) - h(\Delta W^n|V^n,T)\\
\label{eq.RD.TxH.p11}
 &= h(\Delta W^n) - h(\Delta W^n|V^n,T)\\
\label{eq.RD.TxH.p12}
 &\le H(T)
\eal
where \eqref{eq.RD.TxH.p11} is due to the independence of $\Delta W^n$ and $V^n$. Since \eqref{eq.RD.TxH.p8} holds for any $\epsilon>0$, it follows that $R_s \le n^{-1}H(T) = R_h$, as required.

\begin{figure}[t]
    \begin{center}
        \includegraphics[width=4.5in]{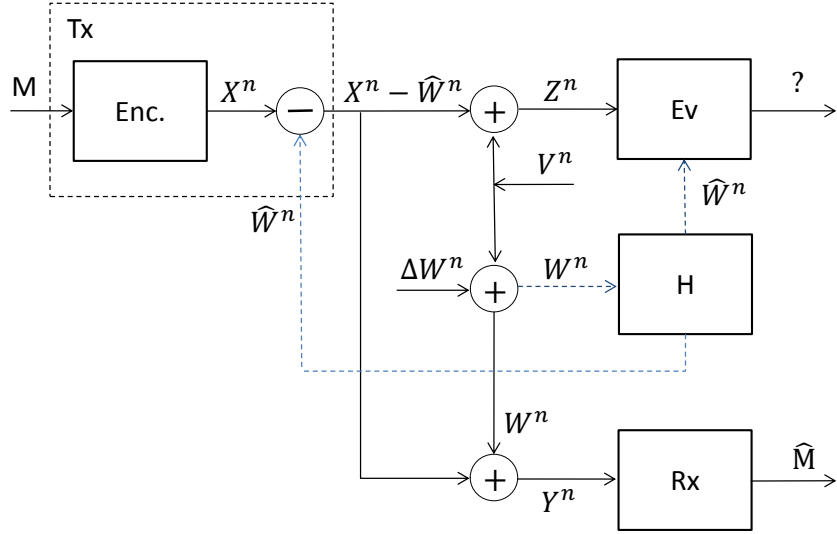}
    \end{center}
    \caption{Phase 2 signalling for the reversely-degraded WTC: the causal help $T=\hat{W}^n$ is a scalar-quantized noise, $\hat{W}_i=Q(W_i)$, pre-subtracted at the Tx; $X^n=X^n(M_2)$ is a codeword from i.i.d.-generated codebook.}
    \label{fig.TxEvH-A-RD}
\end{figure}

\textit{Achievability}: We consider the case where causal help $T=\hat{W}^n$ is a scalar-quantized noise, $\hat{W}_i=Q(W_i)$, available to the Tx and Ev but not to the Rx. Two-phase transmission is used again, where nothing is transmitted in Phase 1 and regular (no wiretap coding) flash signalling is used in Phase 2, where the latter achieves the rate as in \eqref{eq.TxH.Ph2.rate} so that, after time sharing, the achieved rate is arbitrary close to
\bal
R_s = \tau \frac{R_h}{\tau}(1+o(1)) \to R_h
\eal
as $\tau \to 0$. To see that this rate is indeed secure after the time-sharing (i.e., the information leakage rate is arbitrary-low), we show that the Phase 2 leakage rate is uniformly bounded:
\bal
R_{l2} &= n^{-1}I(M_2;Z^n \hat{W}^n|\mathcal{C})\\
\label{eq.RD.TH-C.2}
&\le n^{-1}I(X^n;Z^n \hat{W}^n|\mathcal{C})\\
\label{eq.RD.TH-C.3}
&\le n^{-1}I(X^n;Z^n\hat{W}^n)\\
\label{eq.RD.TH-C.4}
&= n^{-1}I(X^n;\hat{W}^n) + n^{-1}I(X^n;Z^n|\hat{W}^n)\\
\label{eq.RD.TH-C.5}
&= n^{-1}I(X^n;X^n -\hat{W}^n +V^n|\hat{W}^n)\\
\label{eq.RD.TH-C.6}
&= n^{-1}h(X^n +V^n|\hat{W}^n) - n^{-1}h(V^n|\hat{W}^n,X^n)\\
\label{eq.RD.TH-C.7}
&\le n^{-1}h(X^n +V^n) - n^{-1}h(V^n|\hat{W}^n)\\
\label{eq.RD.TH-C.8}
&\le \frac{1}{2}\log\left(1+\frac{P}{\sigma_V^2}\right) + \frac{1}{2}\log\left(1+\frac{\sigma_V^2}{\sigma_{\Delta W}^2}\right)= C_2' < \infty
\eal
where \eqref{eq.RD.TH-C.2} is due to Markov chain $M_2 - X^n - Z^n \hat{W}^n$; \eqref{eq.RD.TH-C.3} is due to Markov chain $\mathcal{C} - X^n - Z^n \hat{W}^n$; \eqref{eq.RD.TH-C.5} and \eqref{eq.RD.TH-C.7} are due to the independence of $X^n$ and $(V^n, \hat{W}^n)$; \eqref{eq.RD.TH-C.8} follows from
\bal
h(X^n +V^n)&\le \frac{n}{2}\log(2\pi e(P+\sigma_V^2)),\\
h(V^n|\hat{W}^n) &\ge h(V^n|W^n)\\
&=h(V^n) - h(W^n) + h(\Delta W^n)\\
&= \frac{n}{2}\log(2\pi e\sigma_V^2) - \frac{n}{2}\log(2\pi e(\sigma_{\Delta W}^2+\sigma_V^2)) + \frac{n}{2}\log(2\pi e\sigma_{\Delta W}^2)
\eal
Thus, after time-sharing, which is equivalent here to Phase 2 only signaling, the leakage rate is
\bal
R_l = \tau R_{l2} \le \tau C_2' \to 0
\eal
when $\tau \to 0$, as required.
\end{proof}

We remark that, as in the reversely-degraded WTC with Rx help, no wiretap coding is needed here to achieve its secrecy capacity if the help is not secure. Burst signalling alone (with regular coding) is sufficient and arbitrarily low leakage rate can be achieved by reducing signaling interval $\tau$. The presence of help $T$ at the Rx, in addition to the Tx, does not increase the capacity. Even though the help is not secure, it still boosts significantly the secrecy capacity, which is zero without help. This is so since the help $T$ serves here as a public key: even though this key is available to the Ev, it cannot make use of it since it does not have the right "lock".

Similarly to the reversely-degraded WTC with Rx help, $C_s=0$ if $\sigma_{\Delta W}^2=0$ and help is not secure (since the Ev receives the same information as the Rx so that no secrecy is possible) and therefore $C_s(\sigma_{\Delta W}^2)$ is discontinuous at $\sigma_{\Delta W}^2=0$:
\bal
C_s(\sigma_{\Delta W}^2) = R_h >0\ \forall\ \sigma_{\Delta W}^2>0
\eal
while $C_s(0)=0$, for any $R_h >0$, i.e. more noise at the Rx ($\sigma_{\Delta W}^2>0$) is better for the secrecy capacity of this channel.

\section{The Non-Degraded WTC with Tx Help}
\label{sec.Non-Deg.TxH}

Let us now consider the non-degraded Gaussian wiretap channel where $W^n$ and $V^n$ are i.i.d. noise sequences, possibly correlated with each other as in \eqref{eq.RWV}, see Fig. \ref{fig.TxRxH-ND} (if help is secure, $T$ is not available to the Ev). Similarly to the case of Rx help, this channel cannot be equivalently reduced to degraded or reversely-degraded case when help is present (even if it is secure). Its secrecy capacity is characterized below.

\begin{figure}[t]
    \begin{center}
        \includegraphics[width=3.5in]{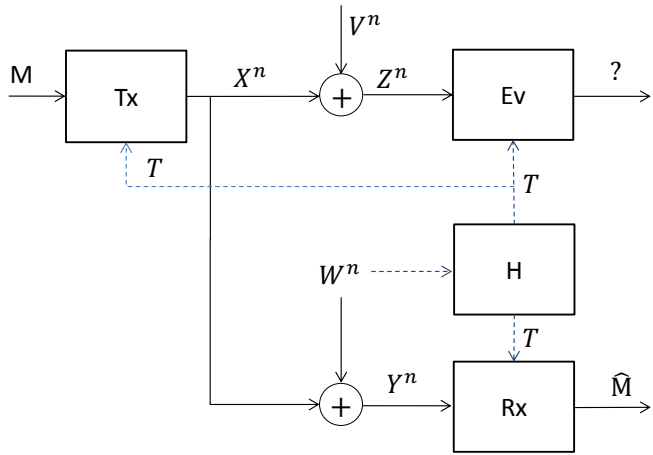}
    \end{center}
    \caption{Non-degraded wiretap channel with a rate-limited help $T$ at the Tx, Rx and Ev. $W^n$ and $V^n$ are i.i.d. noise sequences (possibly correlated with each other) independent of $M$; $\sigma_W^2, \sigma_V^2 >0$; $X^n = X^n(M,T)$, $T=T(W^n),\ H(T) \le n R_h$; conditional on $T$, $X^n$ is independent of $W^n, V^n$.}
    \label{fig.TxRxH-ND}
\end{figure}

\begin{prop}
\label{prop.ND.Cs}
Consider the non-degraded Gaussian WTC channel with secure or non-secure Tx help of rate $R_h$, causal or non-causal, in addition to or instead of the same Rx help, as in Fig. \ref{fig.TxRxH-ND}, where the noise sequences $W^n$ and $V^n$ are i.i.d. but possibly correlated with each other as in \eqref{eq.RWV} and $\sigma_W^2, \sigma_V^2, P >0$. Its secrecy capacity can be lower bounded as follows:
\bal
\label{eq.ND.Cs}
C_s \ge C_{s0} + R_h
\eal
for any $|r| <1$.
\end{prop}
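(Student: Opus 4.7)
Only achievability (the lower bound) is needed, and I would adapt the two-phase time-sharing construction of Theorem \ref{thm.TxH.Cs} to the non-degraded, possibly-correlated setting. In Phase 1, of duration $1-\tau$, no help is used and an ordinary wiretap code for the non-degraded Gaussian WTC---whose no-help secrecy capacity $C_{s0}$ depends only on the marginal variances $\sigma_W^2$, $\sigma_V^2$---delivers a secrecy rate arbitrarily close to $C_{s0}-\epsilon$ with leakage at most $\delta$. In Phase 2, of duration $\tau$, I would employ the encoder of Fig.~\ref{fig.TxEvH-A}: the helper produces the scalar quantization $\hat{W}^n=Q(W^n)$ of rate at most $R_h$, and the Tx pre-subtracts it, transmitting $X_i-\hat{W}_i$ with $X^n=X^n(M_2)$ drawn from an i.i.d.\ Gaussian codebook $\mathcal{C}$. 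By \cite{Lapidoth-20}, the Rx achieves a rate arbitrarily close to $R_h/\tau(1+o(1))$, independent of the $W^n,V^n$ correlation (which the Rx never sees). After time-sharing the total rate tends to $C_{s0}+R_h-\epsilon$ as $\tau\to 0$.

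The one new ingredient is to verify that Phase 2 leaks a bounded amount uniformly in $\tau$ in spite of the correlation between $V^n$ and $\hat{W}^n$. Using the Markov chain $M_2-X^n-(Z^n,\hat{W}^n)$ given $\mathcal{C}$ (valid because $X^n$ is a function of $(M_2,\mathcal{C})$ while $(W^n,V^n)\perp(M_2,\mathcal{C})$) and the fact that $\hat{W}^n=Q(W^n)$, I would chain
\begin{align}
R_{l2} &= n^{-1}I(M_2;Z^n\hat{W}^n|\mathcal{C})\\
&\le n^{-1}I(X^n;Z^n W^n|\mathcal{C})\\
&= n^{-1}I(X^n;X^n+V^n|W^n,\mathcal{C})\\
&\le n^{-1}h(X^n+V^n)-n^{-1}h(V^n|W^n),
\end{align}
where in the third line $\hat{W}^n$ is absorbed into $Z^n$ after conditioning on $W^n$. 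The power constraint $E[(X_i-\hat{W}_i)^2]\le P$ together with the independence $X^n\perp\hat{W}^n$ yields $E[X_i^2]\le P$, so the first term is at most $\tfrac{1}{2}\log(2\pi e(P+\sigma_V^2))$. Since $(V_i,W_i)$ are i.i.d.\ jointly Gaussian with correlation $r$, the second equals $\tfrac{1}{2}\log(2\pi e\sigma_V^2(1-r^2))$. Hence
\begin{align}
R_{l2}\le\tfrac{1}{2}\log\!\Bigl(1+\tfrac{P}{\sigma_V^2}\Bigr)-\tfrac{1}{2}\log(1-r^2)=R_0<\infty,
\end{align}
which is finite precisely because $|r|<1$. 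Time-sharing gives the total leakage $R_l\le(1-\tau)\delta+\tau R_0\to\delta$, so the secrecy constraint is met, and achievability follows.

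The main obstacle I anticipate is the coupling between the pre-subtracted help $\hat{W}^n$ at the encoder and the Ev's noise $V^n$ through the correlation $r$: the usual reduction to a degraded or reversely-degraded channel used in Theorem \ref{thm.Cs.Non-Deg} no longer applies once non-secure help enters the encoder, so leakage must be bounded directly by a joint mutual-information argument. The non-singularity assumption $|r|<1$ enters in exactly one place, through $h(V^n|W^n)=\tfrac{n}{2}\log(2\pi e\sigma_V^2(1-r^2))$, and it is this single condition that keeps $R_0$ finite independently of $\tau$. Note that only a lower bound is claimed because, unlike the Rx-help setting of Theorem \ref{thm.Cs.Non-Deg}, a matching converse does not obviously follow from the degraded/reversely-degraded converses, as the Tx's knowledge of the help may interact nontrivially with the correlation.
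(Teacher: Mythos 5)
Your proposal is correct and follows essentially the same route as the paper: two-phase time sharing with no-help wiretap coding in Phase 1 and Tx noise pre-subtraction in Phase 2, with secrecy secured by a uniform bound on the Phase 2 leakage. Your decomposition $h(X^n+V^n)-h(V^n|W^n)$ is algebraically identical to the paper's $h(X+V)+h(W)-h(V,W)$ and yields the same constant $\tfrac{1}{2}\log\bigl(1+\tfrac{P}{\sigma_V^2}\bigr)-\tfrac{1}{2}\log(1-r^2)$, finite exactly when $|r|<1$.
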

\begin{proof}
To show the achievability of $C_{s0} +R_h$, we use the same two-phase signalling as in Theorem \ref{thm.TxH.Cs}, where Phase 1 makes use of the standard wiretap codes and no help and thus achieves the secrecy rate arbitrary close to $C_{s0}$. Note that $C_{s0}$ is not affected by the correlation since it depends on the marginal distributions of $W^n$ and $V^n$, not on the joint one, so that Phase 1 secrecy rate is not affected either. Likewise, Phase 2 makes use of standard (not wiretap) codes and pre-subtracts quantized noise at the Tx, as in Fig. \ref{fig.TxEvH-A}, and achieves the rate as in  \eqref{eq.TxH.Ph2.rate} (regardless of the correlation), so that, after the time sharing, the rate is as in \eqref{eq.Rs.TxH.Deg}.  To show that this rate is indeed secure, we show that Phase 2 leakage rate is uniformly bounded for any $\tau$. To this end, note that \eqref{eq.TH.2}-\eqref{eq.TH.5} still hold since channel degradedness or noise correlation play no role there so that
\bal
R_{l2} &\le I_0(X;Z W)\\
\label{eq.ND-C.TH.2}
&= I_0(X;X+V,W)\\
\label{eq.ND-C.TH.3}
&\le h(X+V) +h(W) - h(V,W)\\
\label{eq.ND-C.TH.4}
&\le \frac{1}{2}\log\left(1+\frac{P}{\sigma_V^2}\right) - \frac{1}{2}\log\left(1-r^2\right)= C_2 < \infty
\eal
where \eqref{eq.ND-C.TH.3} holds since $X$ is independent of $W,V$; \eqref{eq.ND-C.TH.4} holds since
\bal
h(X+V)&\le \frac{1}{2}\log\left(2\pi e(P+\sigma_V^2)\right)\\ \notag
h(W,V)&= \frac{1}{2}\log\left((2\pi e)^2 |\bR_{WV}|\right)\\
&= \frac{1}{2}\log\left((2\pi e)^2 \sigma_W^2\sigma_V^2(1-r^2)\right)
\eal
Therefore, the overall leakage rate after the two-phase time sharing is arbitrary small as in \eqref{eq.TH.Rl}, as required, and the achieved rate in \eqref{eq.Rs.TxH.Deg} is indeed secure.
\end{proof}

\begin{figure}[t]
    \begin{center}
        \includegraphics[width=4.5in]{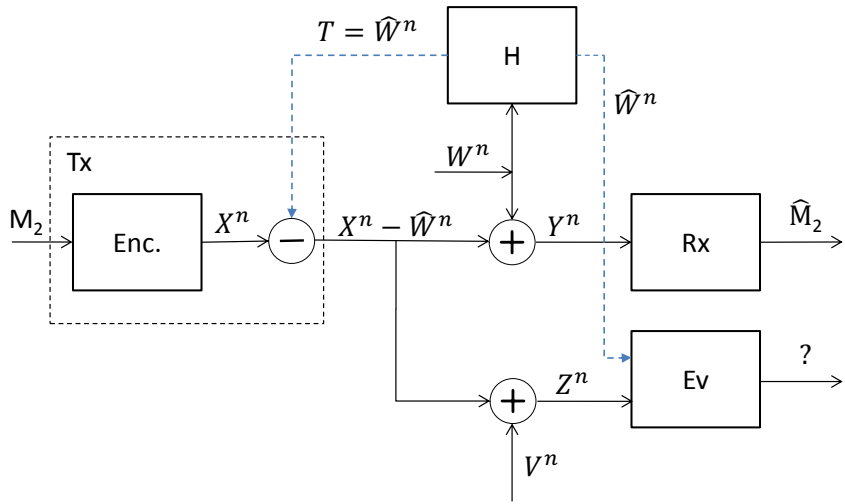}
    \end{center}
    \caption{Phase 2 signalling for the non-degraded WTC: $\hat{W}^n=Q(W^n)$ is scalar-quantized noise, pre-subtracted at the Tx; $X^n=X^n(M_2)$ is a codeword from i.i.d.-generated codebook.}
    \label{fig.ND.TxEvH-A}
\end{figure}

Thus, the Tx help of rate $R_h$, secure or non-secure, causal or non-causal, brings in the secrecy capacity boost of at least $R_h$ in this configuration, regardless of the correlation (as long as $|r|<1$). It is an open question whether $C_s=C_{s0}+R_h$.

\section{Independent Tx/Rx Help Links}
\label{sec.Ind.TxRxH}

In the preceding sections, we have considered the scenarios where the same help was available at the Tx and Rx and have shown that the presence of Tx help in addition to the same Rx help (or vice versa) has no impact on the secrecy capacity and therefore one link can be removed without affecting the capacity.

One may wonder whether this still holds if help links are not identical. Therefore, we consider the scenario whereby independent help links are  available to the Tx and Rx of rate $R_{h1}$ and $R_{h2}$, respectively. The total help $T$ is composite: $T=(T_1,T_2)$, where $T_1$ is available to the Tx and $T_2$  - to the Rx while the whole help $T$ is available to the Ev (in the case of non-secure help); $T_1, T_2$ are independent of each other (e.g. based on different parts of the i.i.d. noise sequence $W^n$) and $H(T_k) \le nR_{hk},\ k=1,2$, so that
\bal
H(T) = H(T_1) + H(T_2) \le R_{h1} +R_{h2} = R_h
\eal

We consider first the degraded WTC as in Fig. \ref{fig.TxRxEvH-I}.

\begin{figure}[t]
    \begin{center}
        \includegraphics[width=3.5in]{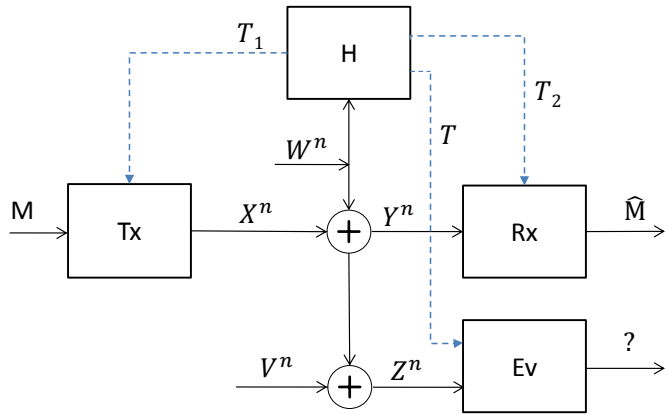}
    \end{center}
    \caption{Degraded wiretap channel with independent help links to the Tx and Rx of rate $R_{h1}$ and $R_{h2}$, respectively; the total help $T=(T_1,T_2)$ is available to the Ev (if help is not secure). $W^n$ and $V^n$ are i.i.d. noise sequences; $P, \sigma_V^2 >0$; $V^n$ is independent of $W^n,\ X^n,\ M$; $X^n = X^n(M,T_1)$, $T=T(W^n),\ H(T_k) \le n R_{hk}$.}
    \label{fig.TxRxEvH-I}
\end{figure}

\begin{thm}
\label{thm.TxRxH-I.Cs}
Consider the degraded Gaussian WTC with causal or non-causal Tx help of rate $R_{h1}$ and Rx help of rate $R_{h2}$ independent of each other (secure or not) as in Fig. \ref{fig.TxRxEvH-I}, and let $\sigma_V^2, P>0$. Its secrecy capacity $C_{s}$ satisfies
\bal
C_{s} \ge C_{s0} + R_{h1} + R_{h2}
\eal
where $C_{s0}$ is the secrecy capacity without help. This holds with equality if help is not secure.
\end{thm}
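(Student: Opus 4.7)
The plan is to reduce both directions to the already-established single-link results. For the \emph{converse}, I would first enhance the channel by providing the Rx-help $T_2$ to the transmitter as well; this can only enlarge the set of achievable secrecy rates, so any upper bound proved in the enhanced setting is also valid for the original setting. In the enhanced setting the composite help $T=(T_1,T_2)$ is available at the Tx, Rx, and Ev, and its rate satisfies $n^{-1}H(T)=n^{-1}[H(T_1)+H(T_2)]\le R_{h1}+R_{h2}$ by the independence of $T_1$ and $T_2$. The chain of inequalities \eqref{eq.TxH.p1}--\eqref{eq.TxH.p13} from the proof of Theorem \ref{thm.TxH.Cs} then applies verbatim with $T$ replaced by $(T_1,T_2)$ and $R_h$ by $R_{h1}+R_{h2}$, yielding $R_s\le C_{s0}+R_{h1}+R_{h2}$.

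For \emph{achievability}, I would employ a three-phase time-sharing scheme, splitting the block of length $n$ into Phase 1 of length $n(1-\tau_1-\tau_2)$, Phase 2 of length $n\tau_1$, and Phase 3 of length $n\tau_2$, and correspondingly partitioning $W^n$ into three independent i.i.d.\ blocks. I would take $T_1$ to be a scalar quantization of the Phase-2 noise block at $R_{h1}/\tau_1$ bits per sample and $T_2$ to be a scalar quantization of the Phase-3 noise block at $R_{h2}/\tau_2$ bits per sample; by construction $T_1$ and $T_2$ are independent and the rate constraints $H(T_k)\le nR_{hk}$ are met. Phase 1 uses ordinary wiretap coding with no help and achieves a secrecy rate arbitrarily close to $C_{s0}$; Phase 2 employs the Tx-help flash signalling with noise pre-subtraction of Theorem \ref{thm.TxH.Cs} and achieves rate arbitrarily close to $R_{h1}/\tau_1$; Phase 3 employs the Rx-help flash signalling with noise cancellation of Theorem \ref{thm.Cs} and achieves rate arbitrarily close to $R_{h2}/\tau_2$. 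Averaging over the three phases and sending $\tau_1,\tau_2\to 0$ produces an overall rate approaching $C_{s0}+R_{h1}+R_{h2}$.

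The secrecy guarantee would be inherited phase-by-phase: Phase 1 has leakage below any prescribed $\delta>0$ by the wiretap code, while the Phase-2 and Phase-3 leakage rates are uniformly bounded in $\tau_1,\tau_2$ by finite constants (namely $C_2$ appearing in \eqref{eq.TH.Rl} for the Tx-help phase, and $C_2'$ appearing in \eqref{eq.AP.7} for the Rx-help phase). Hence the overall leakage after time-sharing is at most $(1-\tau_1-\tau_2)\delta+\tau_1 C_2+\tau_2 C_2'\to\delta$, satisfying the weak secrecy constraint for any $\delta>0$. The main obstacle I anticipate is organisational rather than analytic: one must arrange the allocation of $W^n$ into the three phases so that $T_1$ and $T_2$ are indeed statistically independent, and so that the Ev's simultaneous access to $(T_1,T_2)$ does not couple the phases through $W^n$. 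Once the disjoint noise allocation is fixed, all the per-phase bounds from Theorems \ref{thm.Cs} and \ref{thm.TxH.Cs} transfer essentially verbatim, and no new inequalities are required.
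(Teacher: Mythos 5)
Your proof is correct, and the achievability half coincides with the paper's own argument: the same three-phase time sharing (no-help wiretap coding, then Tx-help flash signalling as in Theorem \ref{thm.TxH.Cs}, then Rx-help flash signalling as in Theorem \ref{thm.Cs}), with the per-phase leakage rates bounded uniformly in $\tau_1,\tau_2$ exactly as you state. The only genuine difference is in the converse. You enhance the channel by handing $T_2$ to the encoder, which makes the problem literally an instance of the Theorem \ref{thm.TxH.Cs} setting with a single non-secure help $T=(T_1,T_2)$ of rate at most $R_{h1}+R_{h2}$ at the Tx, Rx and Ev, and then you invoke that converse as a black box; this is valid because the key step, Lemma \ref{lemma.DH} via \eqref{eq.HYT.ineq}, only uses the conditional independence of $X^n$ and $W^n$ given the help and the power constraint, not the help rate, so it survives when the encoder's help is the composite pair. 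The paper instead avoids the enhancement: it keeps $X^n=X^n(M,T_1)$ and re-checks the chain of Theorem \ref{thm.TxH.Cs} with $T=(T_1,T_2)$, handling \eqref{eq.TxH.p11} through $h(Y^n|T)\le h(Y^n|T_1)\le \tfrac{n}{2}\log\bigl(2\pi e(\sigma_W^2+P)\bigr)$, i.e.\ applying \eqref{eq.HYT.ineq} only with the help actually used by the encoder. Your route is slightly more modular (no re-derivation needed); the paper's route does not require arguing that the Lapidoth--Marti bound extends to encoder help of larger rate. Both yield $R_s\le C_{s0}+R_{h1}+R_{h2}$, and, as you note (and as the paper also implicitly assumes), the phase-wise leakage accounting is legitimate once $T_1,T_2$ are built from disjoint, independent noise blocks.
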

\begin{proof}
To prove achievability, we use three-phase signalling combining Rx and Tx help in independent phases:

\begin{enumerate}
  \item Phase 1 of duration $(1-\tau_1-\tau_2)$:  the standard wiretap coding is used without any help, as in Theorems \ref{thm.Cs}, \ref{thm.TxH.Cs}.
  \item Phase 2 of duration $\tau_1$:  the same as for the Tx help in Theorem \ref{thm.TxH.Cs} (flash signalling with Tx help and regular coding); no Rx help is used in this phase.
  \item Phase 3 of duration $\tau_2$:  the same as for the Rx help in Theorem \ref{thm.Cs} (flash signalling with Rx help and regular coding); no Tx help is used in this phase.
\end{enumerate}

Clearly, a secrecy rate arbitrary close to $C_{s0}$ is achievable in Phase 1, as before. Likewise, based on Theorems \ref{thm.Cs} and \ref{thm.TxH.Cs}, secrecy rates arbitrary close to $R_{h1}/\tau_1(1+o(1))$ and $R_{h2}/\tau_2(1+o(1))$ are achievable in Phases 2 and 3, and, after three-phase time sharing, a secrecy rate arbitrary close to
\bal
(1-\tau_1-\tau_2)C_{s0} + (R_{h1}+ R_{h2})(1+o(1)) \to C_{s0} + R_{h1} + R_{h2}
\eal
is achievable as $\tau_1,\tau_2 \to 0$.

\textit{Converse}: a key observation here is that the converse of Theorem \ref{thm.TxH.Cs} still holds with $T=(T_1,T_2)$. Indeed, \eqref{eq.TxH.p1} - \eqref{eq.TxH.p13} do hold, where \eqref{eq.TxH.p4} holds since
\bal
H(M|Y^nZ^nT)\le H(M|Y^n T_2) \le n\epsilon
\eal
\eqref{eq.TxH.p11} holds since Lemma 1 still holds, due to
\bal
h(Y^n|T) \le h(Y^n|T_1)\le \frac{n}{2}\log(2\pi e(\sigma_W^2 +P))
\eal
where the last inequality is due to \eqref{eq.HYT.ineq}, and since \eqref{eq.L1.hZnT} - \eqref{eq.L1.dh} do hold with $T=(T_1,T_2)$.
\end{proof}

Note that if Tx/Rx help links are independent of each other, the secrecy capacity boost is their combined rate $R_{h1}+R_{h2}$, unlike the case of identical Tx/Rx help where the boost is just an individual help rate, as in Theorem \ref{thm.TxH.Cs}, and one of the two help links can be removed without any effect on the capacity.

Next, we consider  the reversely-degraded channel with independent help links as in Fig. \ref{fig.TxRxEvH-I-RD}, where $T_1$ and $T_2$ are independent of each other.

\begin{figure}[t]
    \begin{center}
        \includegraphics[width=3.5in]{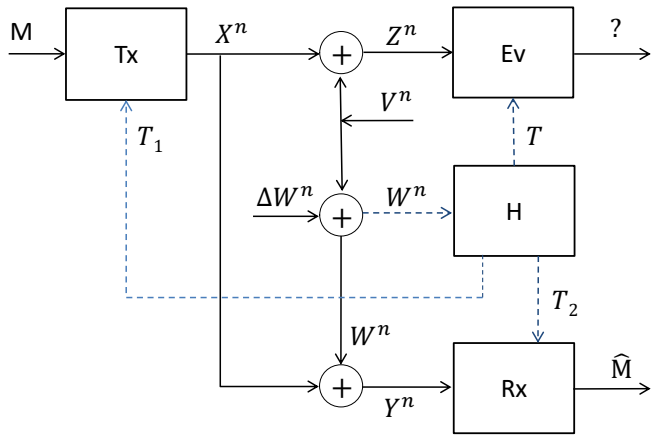}
    \end{center}
    \caption{Reversely-degraded wiretap channel with independent help links to the Tx and Rx of rate $R_{h1}$ and $R_{h2}$, respectively; the total help $T=(T_1,T_2)$ is available to the Ev (if help is not secure). $\Delta W^n$ and $V^n$ are i.i.d. noise sequences independent of each other; $\sigma_{\Delta W}^2, \sigma_V^2, P >0$; $X^n = X^n(M,T_1)$, $T=T(W^n),\ H(T_k) \le n R_{hk}$.}
    \label{fig.TxRxEvH-I-RD}
\end{figure}

\begin{thm}
\label{thm.TxRxH-RD-I.Cs}
Consider the reversely-degraded Gaussian WTC with independent causal or non-causal help links as in  Fig. \ref{fig.TxRxEvH-I-RD}, and let $\sigma_{\Delta W}^2, \sigma_V^2, P>0$. Its secrecy capacity $C_{s}$ satisfies
\bal
C_{s} \ge R_{h1} + R_{h2}
\eal
This holds with equality if the help is not secure.
\end{thm}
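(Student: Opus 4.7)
The plan is to mirror the argument of Theorem \ref{thm.TxRxH-I.Cs} but adapted to the reversely-degraded setting, importing the per-phase bounds from Theorems \ref{thm.Cs.RevDeg} and \ref{thm.TxH.Cs.RD}. Since $C_{s0}=0$ in this configuration, no baseline wiretap-coding phase is needed; it suffices to time-share two short help-based phases.

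\textbf{Achievability.} I would use a two-phase time-sharing scheme with parameters $\tau_1,\tau_2>0$:
\begin{itemize}
\item In Phase 1 of duration $\tau_1$, nothing is transmitted on the Rx-help link; the Tx uses the burst flash-signalling scheme of Theorem \ref{thm.TxH.Cs.RD} (scalar-quantized noise pre-subtracted at the Tx, regular channel coding, no wiretap coding), using the Tx-help link $T_1$ at rate $R_{h1}/\tau_1$. By Theorem \ref{thm.TxH.Cs.RD} this achieves a rate arbitrarily close to $R_{h1}/\tau_1\,(1+o(1))$ with Phase-2 leakage rate uniformly bounded by some constant $C_2'<\infty$ independent of $\tau_1$.
\item In Phase 2 of duration $\tau_2$, nothing is transmitted on the Tx-help link; the Tx uses the Rx-help burst scheme of Theorem \ref{thm.Cs.RevDeg}, employing $T_2$ at rate $R_{h2}/\tau_2$. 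The achieved rate is arbitrarily close to $R_{h2}/\tau_2\,(1+o(1))$ and the leakage rate is uniformly bounded by a constant $R_0<\infty$ independent of $\tau_2$.
\end{itemize}
After time sharing, the overall secrecy rate is arbitrarily close to
\bal
\tau_1\cdot\frac{R_{h1}}{\tau_1}(1+o(1)) + \tau_2\cdot\frac{R_{h2}}{\tau_2}(1+o(1)) \to R_{h1}+R_{h2}
\eal
as $\tau_1,\tau_2\to 0$, while the overall leakage is bounded by $\tau_1 C_2' + \tau_2 R_0\to 0$. Since the two phases are independent and no wiretap coding is required, the construction requires no cross-coupling between $T_1$ and $T_2$.

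\textbf{Converse (non-secure help).} I would show that the converse of Theorem \ref{thm.TxH.Cs.RD} carries over verbatim with the composite help $T=(T_1,T_2)$ and $H(T)=H(T_1)+H(T_2)\le n(R_{h1}+R_{h2})$. Indeed, \eqref{eq.RD.TxH.p2}--\eqref{eq.RD.TxH.p8} need only the following structural properties, all of which still hold: (i) Fano's inequality in the form $H(M\mid Y^n Z^n T_1 T_2)\le H(M\mid Y^n T_2)\le n\epsilon$; (ii) the secrecy constraint $I(M;Z^n T_1 T_2)\le n\epsilon$; (iii) the Markov chain $X^n - T - (V^n,\Delta W^n)$, which is valid because $X^n=X^n(M,T_1)$, the message $M$ is independent of the noises, and $T_1,T_2$ are deterministic functions of $W^n=V^n+\Delta W^n$; and (iv) the independence of $\Delta W^n$ and $V^n$ used in \eqref{eq.RD.TxH.p11}. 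Thus one arrives at $nR_s\le H(T)+2n\epsilon\le n(R_{h1}+R_{h2}+2\epsilon)$, and letting $\epsilon\to 0$ gives $R_s\le R_{h1}+R_{h2}$. The secure-help case inherits the lower bound trivially.

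\textbf{Main obstacle.} The only delicate point is verifying that the Markov chain $X^n - T - (V^n,\Delta W^n)$ used in the step corresponding to \eqref{eq.RD.TxH.p6} continues to hold when encoding uses only $T_1$ while conditioning is on the full $T=(T_1,T_2)$; this is where one must exploit the fact that $T_2$ is also a deterministic function of $W^n$ and that $M$ is independent of the noise pair, so conditioning on the extra variable $T_2$ does not break the required conditional independence. Everything else is a routine adaptation of the arguments already developed in Theorems \ref{thm.Cs.RevDeg}, \ref{thm.TxH.Cs.RD} and \ref{thm.TxRxH-I.Cs}.
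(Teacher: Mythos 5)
Your proposal is correct and follows essentially the same route as the paper: achievability by time-sharing the two burst (flash-signalling) help phases — which is exactly the paper's three-phase scheme with an empty Phase 1, since $C_{s0}=0$ here — and a converse obtained by rerunning the steps \eqref{eq.RD.TxH.p2}--\eqref{eq.RD.TxH.p12} with the composite help $T=(T_1,T_2)$ and $H(T)\le n(R_{h1}+R_{h2})$. Your explicit verification of the Markov chain $X^n - T - (V^n,\Delta W^n)$ and of the Fano step $H(M\mid Y^nZ^nT)\le H(M\mid Y^nT_2)\le n\epsilon$ is exactly the (implicit) content of the paper's argument, so nothing is missing.
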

\begin{proof}
The achievability makes use of the three-phase signaling as in Theorem \ref{thm.TxRxH-I.Cs} where nothing is transmitted in Phase 1. The converse is established by observing that \eqref{eq.RD.TxH.p2}-\eqref{eq.RD.TxH.p12} still hold with $T=(T_1,T_2)$.
\end{proof}

For the non-degraded channel with independent Tx/Rx help links, it can be shown, in a similar way, that Proposition \ref{prop.ND.Cs} still holds with $R_h = R_{h1} + R_{h2}$.

Thus, in all considered configurations, the independent Tx/Rx help links provide additive boost $R_{h1}+R_{h2}$ in secrecy rates, unlike the same help links whereby one link can be omitted without affecting the capacity. This mimics the respective property of the no-Ev channel with independent help links in \cite{Marti-19}.

Finally, one may envision the case of composite help $T=(T_1,T_2)$ where $T_1, T_2$ are not independent of each other but are not identical either (which may be due to e.g. certain limitations in the system architecture). In this case and if the help is not secure, it is not difficult to see that Theorems \ref{thm.TxRxH-I.Cs} and \ref{thm.TxRxH-RD-I.Cs} still hold with
\bal
C_{s0} + \max\{R_{h1},R_{h2}\} \le C_s \le C_{s0} +R_{h}
\eal
where $C_{s0}=0$ for the latter, and $H(T) \le n R_h$, so that the boost in the secrecy capacity is at least $\max\{R_{h1},R_{h2}\}$ (and this lower bound is achievable with one help link only). It remains to be seen whether the upper bound is achieved with equality, i.e. whether the boost is actually $R_h > \max\{R_{h1},R_{h2}\}$.

\section{Conclusion}

The SISO Gaussian wiretap channel with rate-limited help at the receiver (decoder) or/and the transmitter (encoder) was studied and its secrecy capacity has been established under various channel configurations (degraded, reversely degraded and non-degraded) for secure and non-secure help. In all considered cases but one, the rate-limited help results in the secrecy capacity boost (compared to the standard "no help" case) equal to the help rate, so that positive secrecy rate is achievable even for reversely-degraded channel, where the secrecy capacity is zero without help. Surprisingly, secure Rx help does not result in higher capacity compared to non-secure one and more noise at the legitimate receiver can sometimes be beneficial for secrecy capacity. When Tx and Rx help links are identical (carry the same help), any one can be removed without affecting the capacity. However, when the help links are independent, the boost in secrecy capacity equals to the sum of help rates and no one link can be omitted without loss in the capacity. Non-singular Rx and Ev noise correlation has no impact on the secrecy capacity. In the case of Rx help, secure or non-secure, the secrecy capacity is not increased even if the helper is aware of the message being transmitted. In the case of non-secure Tx help, non-causal help does not bring in any increase in the secrecy capacity over the causal one.
Comparing the above results to those for the no-Ev channel with help in \cite{Bross-20}-\cite{Marti-19}, we conclude that the boost in capacity equal to the help rate comes with secrecy "for free". It remains to be seen whether the secrecy of Tx help or helper's knowledge of the message brings in any increase in the secrecy capacity.

\end{document}